\newcommand{\remove}[1]{}
\def\UseBibLatex{1}
\def\input@path{{styles/}}
\providecommand{\BibLatexMode}[1]{}
\providecommand{\BibTexMode}[1]{}
\renewcommand{\BibLatexMode}[1]{#1}
\renewcommand{\BibTexMode}[1]{}
  \renewcommand{\BibLatexMode}[1]{}
  \renewcommand{\BibTexMode}[1]{#1}
\theoremstyle{plain}%
\newtheorem{theorem}{Theorem}[section]
\newtheorem{lemma}[theorem]{Lemma}
\newtheorem{observation}[theorem]{Observation}
\theoremstyle{plain}%
\newtheorem*{remark:unnumbered}[theorem]{Remark}%
\newtheorem{remark}[theorem]{Remark}%
\newtheorem{defn}[theorem]{Definition}
\theoremstyle{nonumberplain}%
\newtheorem{proof}{Proof:}%
\providecommand{\emphind}[1]{}%
\renewcommand{\emphind}[1]{\emph{#1}\index{#1}}
\definecolor{blue25emph}{rgb}{0, 0, 11}
\providecommand{\emphic}[2]{}
\renewcommand{\emphic}[2]{\textcolor{blue25emph}{%
      \textbf{\emph{#1}}}\index{#2}}
\providecommand{\emphi}[1]{}%
\renewcommand{\emphi}[1]{\emphic{#1}{#1}}
\definecolor{almostblack}{rgb}{0, 0, 0.3}
\providecommand{\emphw}[1]{}%
\renewcommand{\emphw}[1]{{\textcolor{almostblack}{\emph{#1}}}}%
\providecommand{\emphOnly}[1]{}%
\renewcommand{\emphOnly}[1]{\emph{\textcolor{blue25emph}{\textbf{#1}}}}
\newcommand{\myqedsymbol}{\rule{2mm}{2mm}}
\newcommand{\SarielThanks}[1]{%
   \thanks{%
      School of Computing and Data Science; %
      University of Illinois; %
      201 N. Goodwin Avenue; %
      Urbana, IL, 61801, USA; %
      \href{mailto:spam@illinois.edu}{sariel@illinois.edu}; %
      \url{http://sarielhp.org/}.%
   #1%
   }%
}
\newcommand{\HLink}[2]{\hyperref[#2]{#1~\ref*{#2}}}
\newcommand{\HLinkSuffix}[3]{\hyperref[#2]{#1\ref*{#2}{#3}}}
\newcommand{\thmlab}[1]{{\label{theo:#1}}}
\newcommand{\thmref}[1]{\HLink{Theorem}{theo:#1}}
\newcommand{\lemlab}[1]{\label{lemma:#1}}
\newcommand{\lemref}[1]{\HLink{Lemma}{lemma:#1}}%
\newcommand{\remlab}[1]{\label{rem:#1}}
\newcommand{\remref}[1]{\HLink{Remark}{rem:#1}}%
\newcommand{\obslab}[1]{\label{observation:#1}}
\newcommand{\obsref}[1]{\HLink{Observation}{observation:#1}}
\newcommand{\seclab}[1]{\label{sec:#1}}
\newcommand{\secref}[1]{\HLink{Section}{sec:#1}}
\newcommand{\defrefY}[2]{\hyperref[def:#1]{#2}}
\providecommand{\eqlab}[1]{}%
\renewcommand{\eqlab}[1]{\label{equation:#1}}
\providecommand{\remove}[1]{}%
\newcommand{\Set}[2]{\left\{ #1 \;\middle\vert\; #2 \right\}}
\newcommand{\pth}[1]{\mleft(#1\mright)}%
\newcommand{\permut}[1]{\left\langle {#1} \right\rangle}
\newcommand{\ProbC}{{\mathbb{P}}}
\newcommand{\ExC}{{\mathbb{E}}}
\newcommand{\Prob}[1]{\ProbC\mleft[ #1 \mright]}
\newcommand{\Ex}[1]{\ExC\mleft[ #1 \mright]}
\newcommand{\ceil}[1]{\mleft\lceil {#1} \mright\rceil}
\newcommand{\brc}[1]{\left\{ {#1} \right\}}
\newcommand{\cardin}[1]{\left\lvert {#1} \right\rvert}%
\renewcommand{\th}{th\xspace}
\renewcommand{\Re}{\mathbb{R}}%
\newlist{compactenumA}{enumerate}{5}%
\setlist[compactenumA]{itemsep=-0.5ex,topsep=0.5ex,partopsep=1ex,parsep=1ex,%
   label=(\Alph*)}%
\newlist{compactenuma}{enumerate}{5}%
\setlist[compactenuma]{itemsep=-0.5ex,topsep=0.5ex,partopsep=1ex,parsep=1ex,%
   label=(\alph*)}%
\newlist{compactenumI}{enumerate}{5}%
\setlist[compactenumI]{itemsep=-0.5ex,topsep=0.5ex,partopsep=1ex,parsep=1ex,%
   label=(\Roman*)}%
\newlist{compactenumi}{enumerate}{5}%
\setlist[compactenumi]{itemsep=-0.5ex,topsep=0.5ex,partopsep=1ex,parsep=1ex,%
   label=(\roman*)}%
\newlist{compactitem}{itemize}{5}%
\setlist[compactitem]{itemsep=-0.5ex,topsep=0.5ex,partopsep=1ex,parsep=1ex,%
   label=\ensuremath{\bullet}}%
\newcommand{\etal}{\textit{et~al.}\xspace}
\numberwithin{figure}{section}%
\numberwithin{table}{section}%
\numberwithin{equation}{section}%
\newcommand{\eps}{{\varepsilon}}%
\newcommand{\poly}{\mathrm{poly}}
\newcommand{\normX}[1]{\left\| #1 \right\|}%
\newcommand{\dY}[2]{\left\| #1 - #2 \right\|}%
\newcommand{\Family}{\EuScript{F}}
\newcommand{\SphereChar}{\ensuremath{\mathbb{S}}}
\newcommand{\Sphere}[1]{{\SphereChar^{(#1)}}}
\newcommand{\DotProd}[2]{\permut{{#1},{#2}}}
\newcommand{\ball}{\mathbf{b}}%
\newcommand{\Center}{\mathbf{c}}
\newcommand{\Approx}[1]{\underaccent{~~#1}{\approx}}
\newcommand{\ApproxEps}[1]{{\underaccent{~~#1}{\approx}\,}}
\newcommand{\PntSet}{P}
\newcommand{\surface}{\gamma}
\newcommand{\surfaceB}{\eta}
\newcommand{\SurfSet}{\EuScript{G}}
\newcommand{\SurfSetB}{\EuScript{H}}
\newcommand{\SurfSetC}{\EuScript{I}}
\newcommand{\Coreset}{\EuScript{S}}
\newcommand{\CoresetB}{\EuScript{T}}
\newcommand{\CoresetC}{\EuScript{R}}
\newcommand{\face}{\mathrm{F}}
\newcommand{\DomainD}{\EuScript{D}_d}
\newcommand{\Domain}{\mathcal{D}}
\newcommand{\LinDim}{\mathbbmtt{d}}%
\newcommand{\LinMap}{{\mathsf{L}}}
\newcommand{\lowLim}{\mathsf{m}}
\newcommand{\hiLim}{\mathsf{M}}
\newcommand{\IntervalSet}{\mathfrak{K}}
\newcommand{\NumSet}{\mathbf{Z}}
\newcommand{\Prism}{\includegraphics[width=0.20cm]{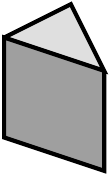}}
\newcommand{\seg}{\sigma}
\newcommand{\kapprox}{\widetilde{\kappa}}
\newcommand{\Polynomial}{\psi}
\newcommand{\dmY}[2]{\mathbf{d}({#2},{#1}) }
\newcommand{\LInfty}[1]{\left\|#1 \right\|_\infty}
\newcommand{\LOne}[1]{\left\|{#1} \right\|_1}
\newcommand{\LTwo}[1]{\left\|{#1}\right\|_2}
\newcommand{\LTwoOpt}{\ell_2}
\newcommand{\LOneOpt}{\ell_1}
\newcommand{\MdPrcX}[1]{{\nu_{\!\!\stackrel{\rule[-.005cm]{0cm}{0.05cm}}{#1}}}}
\newcommand{\MdOpt}{\nu_{\mathrm{opt}}}
\newcommand{\MdOptX}[1]{\nu_{\mathrm{opt}}(#1)}
\newcommand{\Decomp}{\Phi}
\newcommand{\num}{{\overline{\mathbf{z}}}}
\newcommand{\totalWeight}{W}
\newcommand{\SqPrcChar}{\mu}
\newcommand{\SqPrcX}[1]{{\SqPrcChar_{\!\stackrel{
            \rule[-.005cm]{0cm}{0.05cm}}{#1}}}}
\newcommand{\MdPrcXSQ}[1]{\SqPrcX{#1}}
\newcommand{\SqOptX}[1]{\SqPrcChar_{\mathrm{opt}}(#1)}
\newcommand{\popt}{p_{\mathrm{opt}}}
\newcommand{\VDist}{\mathsf{d}_|}
\newcommand{\Weight}[1]{{w_{#1}}}
\providecommand{\Copt}{C_{\mathrm{opt}}}
\providecommand{\Matousek}{Matou{\v s}ek\xspace}
\providecommand{\Err}{\EuScript{E}}
\newcommand{\Interval}{\mathcal{I}}
\newcommand{\Arr}{\mathop{\mathrm{\EuScript{A}}}}
\newcommand{\FuncSet}{\ensuremath{\mathcal{F}}}
\newcommand{\LVL}{\mathbf{L}}
\newcommand{\TLVL}{\mathbf{U}}
\newcommand{\RSample}{\Psi}
\newcommand{\RSampleB}{\Upsilon}
\newcommand{\EXT}[3]{\!\sideset{}{^{#2}_{#3}}{\mathop{#1|}}}
\newcommand{\pnt}{p}
\newcommand{\pntA}{q}
\newcommand{\probA}{\zeta}
\newcommand{\vecA}{\vec{v}}
\newcommand{\Cyl}{\mathfrak{C}}
\newcommand{\Flat}{\mathfrak{f}}
\newcommand{\FlatSet}{\mathfrak{F}}
\providecommand{\TPDF}[2]{\texorpdfstring{#1}{#2}}
\begin{document}

\title{How to Get Close to the Median Shape%
   \thanks{%
      Alternative titles for this paper include: ``How to stay connected with your inner circle'' and ``How to compute one circle to rule them all''. The paper appeared in SoCG 2006 \cite{h-htgcm-06-conf}, and in a journal version in \cite{h-htgcm-07}. Th paper was slightly edited. It was uploaded to the arxiv at this late date to ensure its availability.%
   }%
}%

\author{Sariel Har-Peled\SarielThanks{Work on this paper was partially
      supported by an NSF CAREER award CCR-0132901.
      }
      }

\date{May 6, 2016\footnote{Paper Re\LaTeX{}ed on \today.}}

\maketitle

\begin{abstract}
    In this paper, we study the problem of $L_1$-fitting a shape to a set of $n$ points in $\Re^d$ (where $d$ is a fixed constant), where the target is to minimize the sum of distances of the points to the shape, or the sum of squared distances.  We present a general technique for computing a $(1 + \eps ) $-approximation for such a problem, with running time $O(n + \poly( \log n, 1/\eps))$, where $\poly(\log n, 1/\eps)$ is a polynomial of constant degree of $\log n$ and $1/\eps$ (the power of the polynomial is a function of $d$).  The new algorithm runs in linear time for a fixed $\eps>0$, and is the first subquadratic algorithm for this problem.

    Applications of the algorithm include best fitting either a circle, a sphere, or a cylinder to a set of points when minimizing the sum of distances (or squared distances) to the respective shape.
\end{abstract}

\section{Introduction}

Consider the problem of fitting a parameterized shape to given data.  This is a natural problem that arises in statistics, learning, data mining, and many other fields. What measure used for the quality of fitting has a considerable impact on the hardness of the problem of finding the best fitting shape.  As a concrete example, let $P$ be a set of $n$ points in $\Re^d$. A typical criterion for measuring how well a shape $\gamma$ fits $P$, denoted as $\mu(P,\gamma)$, is the maximum distance between a point of $P$ and its nearest point on $\gamma$, i.e., $\mu(P,\gamma) = \max_{p \in P} d(p,\gamma)$, where $d(p,\gamma) = \min_{q \in \gamma} dY{p}{q}$.  The extent measure of $P$ is $\mu(P) = \min_{\gamma \in \Family} \mu(P,\gamma)$, where $\Family$ is a family of shapes (such as points, lines, hyperplanes, spheres, etc.).  For example, the problem of finding the minimum radius sphere (resp.\ cylinder) enclosing $P$ is the same as finding the point (resp.\ line) that fits $P$ best, and the problem of finding the smallest width slab (resp.\ spherical shell, cylindrical shell) is the same as finding the hyperplane (resp.\ sphere, cylinder) that fits $P$ best.

A natural way of encoding the fitting information for a given shape $\gamma$ for the points of $P$, is by creating a point $\dmY{\gamma}{P} \in \Re^n$, where the $i$\th coordinate is the distance of the $i$th point of $P$ from $\gamma$.  Thus, the shape fitting problem mentioned above (of minimizing the distance to the furthest point to the shape) is to find the shape $\gamma$ that realizes $\min_{\gamma \in \Family} \LInfty{\dmY{\gamma}{P}}$. We will refer to this as the \emph{$L_\infty$-shape fitting problem}.

The exact algorithms for best shape fitting are generally expensive, e.g., the best known algorithms for computing the smallest volume bounding box containing $P$ in $\Re^3$ require $O(n^3)$ time \cite{o-fmeb-85}.  Consequently, attention has shifted to developing approximation algorithms \cite{bh-eamvb-01,zs-amves-02}. A general approximation technique was recently developed for such problems by Agarwal \etal \cite{ahv-aemp-04}.  This technique implies among other things that one can approximate the circle that best fit a set of points in the plane in $O(n + 1/\eps^{O(1)})$ time, where the fitting measure is the maximum distance of the point to the circle (in fact, this special case was handled before by Agarwal \etal{} \cite{aahs-aeamw-00} and by Chan \cite{c-adwse-02}).

The main problem with the $L_\infty$-fitting is its sensitivity to noise and outliers. There are two natural remedies.

The first is to change the target function to be less sensitive to outliers. For example, instead of considering the maximum distance, one can consider the average distance. This is the \emph{$L_1$-fitting problem}, and here we would like to compute the shape realizing $\displaystyle \LOneOpt(\Family, P) = \min_{\gamma \in \Family} \LOne{\dmY{\gamma}{P}} = \min_{\gamma \in \Family} \sum_{p \in P} d(p,\gamma)$. Similarly, in the \emph{$L_2$-fitting problem}, one would like to minimize the average squared distances of the points to the shape; namely, $\displaystyle \LTwoOpt(\Family, P) = \min_{\gamma \in \Family} \LTwo{\dmY{\gamma}{P}}^2 = \min_{\gamma \in \Family} \sum_{p \in P} \pth{d(p,\gamma)}^2$.  The $L_2$ fitting problem in the case of a single linear subspace is well understood and is computed via singular value decomposition (SVD).  Fast approximation algorithms are known for this problem; see \cite{fkv-fmcaf-04, drvw-mapcv-06} and references therein. As for the $L_1$-fitting of a linear subspace, this problem can be solved using linear programming techniques, in polynomial time in high dimensions, and linear time in constant dimension \cite{ykii-avo-88}.  Recently, Clarkson gave a faster approximation algorithm for this problem \cite{c-ssalo-05} which works via sampling.

The problem seems to be harder once the shape we consider is not a linear subspace. There is considerable work on nonlinear regressions \cite{sw-nr-89} (i.e., extension of the $L_2$ least squares technique) for various shapes. Still, there seems to be no efficient guaranteed approximation algorithm even for the ``easy'' problem of $L_1$-fitting a circle to the data. The hardness seems to arise from the target function being a sum of terms, each term being an absolute value of a difference of a square root of a polynomial and a radius (see \secref{problem_definition}). In fact, this is an extension of the Fermat-Weber problem, and it seems doubtful that an efficient exact solution would exist for such a problem.

The second approach is to specify a number $k$ of outliers in advance and find the best shape $L_\infty$-fitting all but $k$ of the input points. Har-Peled and Wang showed that there is a coreset for this problem \cite{hw-sfo-04}, and as such it can be solved in $O(n + \poly(k, \log n, 1/\eps))$ time, for a large family of shapes. The work of Har-Peled and Wang was motivated by the aforementioned problem of $L_1$-fitting a circle to a set of points. (The results of Har-Peled and Wang were recently improved by Agarwal \etal{} \cite{ahy-rsfpg-06}, but since the improvement is not significant for our purposes, we will stick with the older reference.)

\medskip

\paragraph{Our Results.}
In this paper, we describe a general technique for computing a $(1+\eps)$-approximate solution to the $L_1$ and $L_2$-fitting problems, for a family of shapes which is well behaved (roughly speaking, those are all the shapes that the technique of Agarwal \etal{} \cite{ahv-aemp-04} can handle). Our algorithm achieves a running time of $O(n + \poly( \log n, 1/\eps) )$.  As such, this work can be viewed as the counterpart to Agarwal \etal{} \cite{ahv-aemp-04} work on the approximate $L_\infty$-fitting problem. This is the first linear-time algorithm for this problem.

The only previous algorithm directly relevant for this result, we are aware of, is due to Har-Peled and Koltun \cite{hk-so-05} that, in $O(n^2 \eps^{-2} \log^2 n)$ time, approximates the best circle $L_1$-fitting a set of points in the plane.

\paragraph{Comment on running time.}
The running time of our algorithms is $O(n + \poly(\log n, 1/\eps)) = O(n + \poly(1/\eps))$.  However, throughout the paper, we use the former (and more explicit) bound to emphasize that the running time of the second stage of our algorithms depends on $n$, unlike other geometric approximation algorithms.

\medskip

\paragraph{Paper organization.}
In \secref{preliminaries} we introduce some necessary preliminaries.  In \secref{problem_statement} the problem is stated formally.  In \secref{chunking}, we provide a (somewhat bizarre) solution for the one-point $L_1$-fitting problem in one dimension (i.e., the one-median problem in one dimension).  In \secref{reduction}, we show how the problem size can be dramatically reduced.  In \secref{slow}, a slow approximation algorithm is described for the problem (similar in nature to the algorithm of \cite{hk-so-05}).  In \secref{main}, we state our main result and some applications. Conclusions are provided in \secref{conclusions}.

\section{Preliminaries}
\seclab{preliminaries}

Throughout the paper, we refer to the $x_d$-parallel direction in $\Re^d$ as \emph{vertical}. Given a point $x = (x_1, \ldots, x_{d-1})$ in $\Re^{d-1}$, let $(x, x_d)$ denote the point $(x_1, \ldots, x_{d-1}, x_d)$ in $\Re^d$.  Each point $x \in \Re^{d}$ is also a vector in $\Re^d$. Given a geometric object $A$, $A + x$ represents the object obtained by translating each point in $A$ by $x$.

A \emph{surface} is a subset of $\Re^d$ that intersects any vertical line in a single point. A \emph{surface patch} is a portion of a surface such that its vertical projection into $\Re^{d-1}$ is a semi-algebraic set of constant complexity, usually a simplex.  Let $A$ and $B$ be either a point, a hyperplane, or a surface in $\Re^d$. We say that $A$ lies \emph{above} (resp. \emph{below}) $B$, denoted by $A \succeq B$ (resp. $A \preceq B$), if for any vertical line $\ell$ intersecting both $A$ and $B$, we have that $x_d \geq y_d$ (resp. $x_d \le y_d$), where $(x_1, \ldots, x_{d-1}, x_d) = A \cap \ell$ and $(x_1, \ldots, x_{d-1}, y_d) = B \cap \ell$.  (In particular, if both $A$ and $B$ are hyperplanes, then $A \succeq B$ implies that $A$ and $B$ are parallel hyperplanes.)

Two non-negative numbers $x$ and $y$ are \emph{$(1 \pm \eps)$-approximation} of each other if $(1-\eps)x \leq y \leq (1+\eps) x$ and $(1-\eps)y \leq x \leq (1+\eps) y$. We denote this fact by $x \ApproxEps{\eps} y$.  Two non-negative functions $f(\cdot)$ and $g(\cdot)$ (defined over the same domain) are \emph{$(1 \pm \eps)$-approximation} of each other, denoted by $f \ApproxEps{\eps} g$, if $f(x) \ApproxEps{\eps} g(x)$, for all $x$.

\begin{observation}
    \obslab{approximate_transitive}%
    Let $x$ and $y$ be two positive numbers and $\eps < 1/4$. We have:
    \begin{compactenumi}
        \smallskip%
        \item If $x \ApproxEps{\eps} y$ and $y \ApproxEps{\eps} z$ then $x \Approx{3\eps} z$.
        \smallskip%
        \item If $\cardin{ x -y } \leq \eps x$ then $x \Approx{2\eps} y$.
        \smallskip%
        If ${ x } \leq (1+\eps) y$ and $y \leq (1+\eps) x$ then $x \ApproxEps{\eps} y$.
    \end{compactenumi}
\end{observation}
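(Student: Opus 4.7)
The plan is to prove the three parts in a convenient order, starting with part~(iii), since it essentially re-expresses the definition and will be useful for the other two.

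For part~(iii), I would observe that from the elementary inequality $(1-\eps)(1+\eps) = 1-\eps^2 \le 1$, we get $1/(1+\eps) \ge 1-\eps$. Thus $y \le (1+\eps)x$ gives $x \ge y/(1+\eps) \ge (1-\eps)y$, and symmetrically $x \le (1+\eps)y$ yields $y \ge (1-\eps)x$. Combined with the two given upper bounds, this matches the definition of $x \ApproxEps{\eps} y$.

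For part~(i), the idea is to chain the multiplicative bounds and use the fact that for $\eps < 1/4$, the second-order term is absorbed, i.e., $(1+\eps)^2 = 1 + 2\eps + \eps^2 \le 1 + 3\eps$. Explicitly, from $x \ApproxEps{\eps} y$ and $y \ApproxEps{\eps} z$ we have $z \le (1+\eps)y \le (1+\eps)^2 x \le (1+3\eps)x$ and, symmetrically, $x \le (1+\eps)y \le (1+\eps)^2 z \le (1+3\eps)z$. Applying part~(iii) with parameter $3\eps$ (which is valid since $3\eps < 1$) then yields $x \Approx{3\eps} z$.

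For part~(ii), the hypothesis $|x-y| \le \eps x$ unpacks to the two bounds $(1-\eps)x \le y$ and $y \le (1+\eps)x$. The second is stronger than $y \le (1+2\eps)x$. For the other direction, the first bound gives $x \le y/(1-\eps)$, so it suffices to verify $1/(1-\eps) \le 1+2\eps$, i.e.\ $1 \le (1-\eps)(1+2\eps) = 1 + \eps - 2\eps^2$, which holds as long as $\eps \le 1/2$, in particular for $\eps < 1/4$. Another appeal to part~(iii), now with parameter $2\eps$, completes the proof.

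None of the three parts poses a real obstacle; the only subtle point is tracking where the $\eps < 1/4$ hypothesis is actually used, namely to ensure that $(1+\eps)^2 \le 1+3\eps$ in~(i) and $(1-\eps)(1+2\eps) \ge 1$ in~(ii), and to keep the inflated parameters $2\eps$ and $3\eps$ below~$1$ so that part~(iii) remains applicable.
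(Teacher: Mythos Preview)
Your argument is correct. The paper itself does not supply a proof for this observation; it is stated without justification and used freely thereafter. Your write-up fills in exactly the routine manipulations the paper leaves implicit, and the order you chose (proving (iii) first and then invoking it for (i) and (ii)) is clean and natural.
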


\subsection{Problem Statement}
\seclab{problem_statement}

\subsubsection{The Circle Fitting Case}
\seclab{circle_fitting}

To motivate our exposition, we will first consider the problem of $L_1$-fitting a circle to a set of points in the plane.

Let $\PntSet = \brc{p_1, \ldots, p_n}$ be a set of $n$ points in the plane, and consider the price $\MdPrcX{\PntSet} (C )$ of $L_1$-fitting the circle $C$ to $\PntSet$.  Formally, for a point $p_i \in \PntSet$ let $f_i(C) = \cardin{ \bigl. dY{p_i}{c} - r}$, where $c$ is the center of $C$, and $r$ is the radius of $C$.  Thus, the overall price, for a circle $C$ centered at $(x,y)$ with radius $r$, is
\begin{align*}
  \MdPrcX{\PntSet}(C)
  &=%
  \MdPrcX{\PntSet}(x,y,r) = \sum_{i=1}^n
    f_i(C)
    \\&%
  = \sum_{i=1}^n \cardin{ \bigl. dY{p_i}{c} - r} =
  \sum_{i=1}^n \cardin{ \bigl. \sqrt{ \pth{x_i - x}^2 + \pth{y_i
  - y}^2 } - r},
\end{align*}
where $p_i = (x_i,y_i)$, for $i=1,\ldots, n$.  We are looking for the circle $C$ minimizing $\MdPrcX{\PntSet}(C)$. This is the circle that best fits the point set under the $L_1$ metric. Let $\MdOpt(\PntSet)$ denote the price of the optimal circle $\Copt$.

Geometrically, each function $f_i$ induces a surface
\begin{equation*}
    \surface_i = \Set{ (x_p, y_p, \dY{p}{p_i}) }{ p \in \Re^2}
\end{equation*}
in 3D, which is a cone. Here, a circle is encoded by a 3D point
\begin{math}
    C= (x,y,r).
\end{math}
The value of $f_i(C)$ is the vertical distance between the point $C$ and surface $\surface_i$.  Thus, we have a set $\SurfSet$ of $n$ surfaces in 3D, and we are interested in finding the point that minimizes the sum of vertical distances of this point to the $n$ surfaces.

\subsubsection{The General Problem}
\seclab{problem_definition}

Formally, for a weighted set of surfaces $\SurfSet$ in $\Re^d$ and $p$ any point in $\Re^d$ let
\[
\MdPrcX{\SurfSet}(p) = \sum_{\surface \in
   \SurfSet} \Weight{\surface} \cdot \VDist(p, \surface )
\]
denote the \emph{$L_1$ distance of $p$ from $\SurfSet$}, where $ \VDist(p, \surface )$ is the vertical distance between $p$ and the surface $\surface$ and $\Weight{\surface}$ is the weight associated with $\surface$.  Throughout our discussion, weights are positive integer numbers.  If $\SurfSet$ is unweighted then any surface $\surface \in \SurfSet$ is assigned weight $\Weight{\surface}=1$. We would be interested in finding the point that minimizes $\MdPrcX{\SurfSet}(p)$ when $p$ is restricted to a domain $\DomainD$, which is a semi-algebraic set of constant complexity in $\Re^d$. This is the \emph{$L_1$-fitting} problem. The \emph{$L_2$-fitting problem} is computing the point $p \in \DomainD$ realizing the minimum of $\MdPrcXSQ{\SurfSet}(p) = \sum_{\surface \in \SurfSet} \Weight{\surface} \cdot \pth{\VDist(p, \surface )}^2$.

It would be sometime conceptually easier (e.g., see \secref{cylinder}) to think about the problem algebraically, where the $i$th surface $\surface_i$ is an image of a (non-negative) $(d-1)$-dimensional function $f_i(x_1, \ldots, x_{d-1})$ $= \sqrt{ p_i( x_1,\ldots, x_{d-1})}$, where $p_i(\cdot)$ is a constant degree polynomial, for $i=1,\ldots, n$.  We are interested in approximating one of the following quantities:
\begin{center}
\begin{tabular}{cc}
  (i)
  &
    $\displaystyle \min_{(x_1,\ldots, x_{d-1}) \in \Domain}
    \sum_{i=1}^n \Weight{i} \cdot f_i(x_1,\ldots,
    x_{d-1})$,\\
  (ii)
  & $\displaystyle
    \MdOptX{\SurfSet}
    = \min_{x  \in \DomainD} \MdPrcX{\SurfSet}(x) =
    \min_{(x_1,\ldots, x_{d}) \in \DomainD} \sum_i
    \Weight{i} \cdot \cardin{f_i(x_1,\ldots, x_{d-1}) - x_d}$,
  \\
  or (iii)
  &
    $\displaystyle
    \SqOptX{\SurfSet}
    =
    \min_{x  \in \DomainD} \SqPrcX{\SurfSet}(x)
    =%
    \!\!\!\!\! \min_{(x_1,\ldots, x_{d}) \in \DomainD} \sum_i
    \Weight{i} \cdot \pth{f_i(x_1,\ldots, x_{d-1}) - x_d}^2$,
\end{tabular}
\end{center}
where $\Domain \subseteq \Re^{d-1}$ and $\DomainD \subseteq \Re^d`$ are semi-algebraic sets of constant complexity, and the weights $\Weight{1},\ldots, \Weight{n}$ are positive integers.  Note that (i) is a special case of (ii), by setting $\DomainD = \Domain \times \brc{0}$.

To simplify the exposition, we will assume that $\DomainD=\Re^d$. It is easy to verify that our algorithm also works for the more general case with a few minor modifications.

\paragraph{The linearization dimension.}
In the following, a significant parameter in the exposition is the \emph{linearization dimension} $\LinDim$, which is the target dimension we need to map the polynomials $p_1,\ldots, p_n$ so that they all become linear functions. For example, if the polynomials are of the form $\Polynomial_i(x,y,z) = x^2 + y^2 + z^2 + a_ix+b_iy+ c_iz$, for $i=1,\ldots, n$, then they can be linearized by a mapping $\LinMap(x,y,z) = (x^2 +y^2 + z^2, x, y, z)$, such that $h_i(x,y,z,w) = w + a_ix+b_iy+ c_iz$ is a linear function and $\Polynomial_i(x,y,z) = h_i(\LinMap(x,y,z))$. Thus, in this specific example, the linearization dimension is $4$. The linearization dimension is always bounded by the number of different monomials appearing in the polynomials ${p_1,\ldots, p_n}$. Agarwal and \Matousek{}~\cite{am-rsss-94} described an algorithm that computes a linearization of the smallest dimension for a family of such polynomials.

\section{Approximate \TPDF{$L_1$}{L1}-Fitting in One Dimension}
\seclab{chunking}

In this section, we consider the one-dimensional problem of approximating the distance function of a point $z$ to a set of points $\NumSet = \permut{ z_1, z_2, \ldots, z_n}$, where $z_1 \leq z_2 \leq \ldots \leq z_n$. Formally, we want to approximate the function $\MdPrcX{\NumSet}(\num) = \sum_{z_i \in \NumSet} \cardin{z_i-\num}$.  This function is the one-median function for $\NumSet$ on the real line. This corresponds to a vertical line in $\Re^d$, where each $z_i$ represents the intersection of the vertical line with the surface $\surface_i$.  The one-dimensional problem is well understood, and there exists a coreset for it; see \cite{hm-ckmkm-04,hk-sckmk-07}.  Unfortunately, it is unclear how to extend these constructions to the higher-dimensional case; specifically, how to perform the operations required in a global fashion on the surfaces so that the construction would hold for all vertical lines. See \remref{hardness} below for more details on this ``hardness''.  Thus, we present here an alternative construction.

\begin{defn}
    For a set of weighted surfaces $\SurfSet$ in $\Re^d$, a weighted subset $\Coreset \subseteq \SurfSet$ is an \emph{$\eps$-coreset} for $\SurfSet$ if for any point $p \in \Re^d$ we have $\MdPrcX{\SurfSet}(p) \ApproxEps{\eps} \MdPrcX{\Coreset}(p)$.
\end{defn}

For the sake of simplicity of exposition, in the following we assume that $\SurfSet$ is unweighted. The weighted case can be handled in a similar fashion.

The first step is to partition the points. Formally, we partition $\NumSet$ symmetrically into subsets, such that the sizes of the subsets increase as one comes toward the middle of the set.  Formally, the set $L_i = \brc{ z_i}$ contains the $i$th point on the line, for $i=1,\ldots, \lowLim$, where $\lowLim \geq 10/\eps$ is a parameter to be determined shortly.  Similarly, $R_i = \brc{z_{n-i+1}}$, for $i=1, \ldots, \lowLim$. Set $\alpha_\lowLim = \lowLim$, and let $\alpha_{i+1} = \min \pth{\ceil{(1+\eps/10)\alpha_i}, n/2}$, for $i=\lowLim,\ldots,\hiLim$, where $\alpha_\hiLim$ is the first number in this sequence equal to $n/2$. Now, let $L_i = \brc{z_{\alpha_{i-1}+1}, \ldots, z_{\alpha_{i}} }$ and $R_i = \brc{z_{n- \alpha_{i-1}}, \ldots, z_{n - \alpha_{i}+1} }$, for $i=\lowLim+1,\ldots, \hiLim$.  We will refer to a set $L_i$ or $R_i$ as a \emph{chunk}. Consider the partition of $\NumSet$ formed by the chunks $L_1, L_2, \ldots, L_\hiLim,R_\hiLim, \ldots, R_2, R_1$.  This is a partition of $\NumSet$ into ``exponential sets''.  The first/last $\lowLim$ sets on the boundary are singletons, and all the other sets grow exponentially in cardinality, till they cover the whole set $\NumSet$.

Next, we pick arbitrary points $l_i \in L_i$ and $r_i \in R_i$ and assign them weight $\Weight{i} = \cardin{R_i} = \cardin{L_i}$, for $i=1,\ldots, \hiLim$. Let $\Coreset$ be the resulting weighted set of points. We claim that this is a coreset for the $1$-median function.

But before delving into this, we need the following technical lemma.
\begin{lemma}
    \lemlab{trivial}%
    Let $A$ be a set of $n$ real numbers, and let $\psi$ and $\num$ be any two real numbers. We have that $\cardin{\MdPrcX{A}(\num) - \cardin{A} \cdot \cardin{\psi - \num}} \leq \MdPrcX{A}(\psi)$.
\end{lemma}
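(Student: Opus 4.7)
The plan is to reduce the statement to two applications of the triangle inequality, one term-by-term and one for the outer sum. Recall that $\MdPrcX{A}(\num) = \sum_{a \in A}\cardin{a - \num}$ and similarly for $\psi$, while $\cardin{A} \cdot \cardin{\psi - \num} = \sum_{a \in A}\cardin{\psi - \num}$. So the quantity inside the outer absolute value can be rewritten as
\[
    \MdPrcX{A}(\num) - \cardin{A}\cdot \cardin{\psi - \num}
    \;=\; \sum_{a \in A} \bigl( \cardin{a - \num} - \cardin{\psi - \num} \bigr).
\]

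First I would pull the outer absolute value inside the sum, using $\bigl\lvert \sum_a x_a \bigr\rvert \le \sum_a \cardin{x_a}$, which gives
\[
    \left\lvert \MdPrcX{A}(\num) - \cardin{A}\cdot \cardin{\psi - \num} \right\rvert
    \;\le\; \sum_{a \in A} \Bigl\lvert \cardin{a - \num} - \cardin{\psi - \num} \Bigr\rvert.
\]

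Next I would apply the reverse triangle inequality to each term: for every $a \in A$, $\bigl\lvert \cardin{a - \num} - \cardin{\psi - \num} \bigr\rvert \le \cardin{(a - \num) - (\psi - \num)} = \cardin{a - \psi}$. Summing over $a \in A$ yields $\sum_{a \in A} \cardin{a - \psi} = \MdPrcX{A}(\psi)$, which is exactly the right-hand side.

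There is essentially no obstacle here; the statement is a direct consequence of the triangle inequality applied twice (once inside the sum to compare $\cardin{a-\num}$ with $\cardin{\psi-\num}$ via the ``pivot'' $\psi$, and once outside the sum to commute the absolute value with the summation). The only thing worth double-checking is that the inequality is stated without any sign assumption on $\num$ or $\psi$, which is fine because both triangle-inequality applications are valid for arbitrary reals.
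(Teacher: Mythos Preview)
Your proof is correct and follows essentially the same route as the paper: rewrite the difference as a sum, pull the absolute value inside via the triangle inequality, then apply the reverse triangle inequality term by term to obtain $\MdPrcX{A}(\psi)$.
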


\begin{proof}
    We have
    \begin{align*}
      \cardin{\bigl. \MdPrcX{A}(\num) - \cardin{A} \cdot
      \cardin{\psi - \num}}
      &=
        \cardin{\bigl. \sum_{p \in A}
        \cardin{p - \num} - \cardin{A} \cdot \cardin{\psi - \num}}
      \\&
      \leq%
      \sum_{p \in A} \cardin{\bigl. \cardin{\num - p} - \cardin{\psi
      - \num} }
      \leq%
      \sum_{p \in A} \cardin{p - \psi} %
      =%
      \MdPrcX{A}(\psi),
    \end{align*}
    by the triangle inequality.
\end{proof}

\begin{lemma}
    \lemlab{1_d_coreset}%
    It holds
    \begin{math}
        \MdPrcX{\NumSet}(\num) \Approx{\eps/5} \MdPrcX{\Coreset}(\num),
    \end{math}
    for any $\num\in \Re$.
\end{lemma}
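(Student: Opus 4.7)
The plan is to sum \lemref{trivial} over chunks and bound the aggregate error by $\MdPrcX{\NumSet}(\num)$ itself via a gap-counting argument on the sorted sequence.

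First I would split
\begin{align*}
\MdPrcX{\NumSet}(\num) - \MdPrcX{\Coreset}(\num)
&= \sum_{i=1}^{\hiLim} \Bigl(\MdPrcX{L_i}(\num) - \cardin{L_i}\cdot \cardin{l_i - \num}\Bigr) \\
&\quad+ \sum_{i=1}^{\hiLim} \Bigl(\MdPrcX{R_i}(\num) - \cardin{R_i}\cdot \cardin{r_i - \num}\Bigr),
\end{align*}
and apply \lemref{trivial} (with $A=L_i$, $\psi=l_i$) term-wise to obtain
$\cardin{\MdPrcX{\NumSet}(\num) - \MdPrcX{\Coreset}(\num)} \leq \sum_i \MdPrcX{L_i}(l_i) + \sum_i \MdPrcX{R_i}(r_i)$. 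The singleton chunks ($i \leq \lowLim$) contribute $0$; for $i > \lowLim$ I use $\MdPrcX{L_i}(l_i) \leq \cardin{L_i}\cdot (z_{\alpha_i} - z_{\alpha_{i-1}+1})$, and similarly on the right.

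Writing $g_j = z_{j+1} - z_j$ and telescoping each chunk diameter, the total error becomes $\sum_j w_j g_j$, where $w_j = \cardin{L_{i(j)}}$ (or $\cardin{R_{i(j)}}$) for the unique chunk that contains both endpoints of the gap $(z_j, z_{j+1})$, and $w_j = 0$ for gaps that straddle two chunks or lie inside a singleton. For an interior gap of $L_i$, the constraints $\alpha_{i-1} < j < \alpha_i$ and $\alpha_i \leq n/2$ force $\min(j, n-j) > \alpha_{i-1}$, while the recurrence $\alpha_i \leq \ceil{(1+\eps/10)\alpha_{i-1}}$ together with $\alpha_{i-1} \geq \lowLim \geq 10/\eps$ gives $\cardin{L_i} \leq (\eps/10)\alpha_{i-1} + 1 \leq (\eps/5)\alpha_{i-1}$. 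Hence $w_j \leq (\eps/5)\min(j, n-j)$ whenever $w_j \neq 0$; the $R_i$ side is symmetric.

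To relate $\sum_j \min(j, n-j)\, g_j$ to $\MdPrcX{\NumSet}(\num)$, I would use the integral representation $\sum_k \cardin{z_k - \num} = \int \#\set{k : t \text{ lies between } z_k \text{ and } \num}\, dt$. Over the gap $(z_j, z_{j+1})$ the integrand equals $j$ to the left of $\num$ and $n-j$ to the right, so that gap contributes at least $\min(j, n-j)\, g_j$ regardless of where $\num$ sits. Summing over $j$ yields $\MdPrcX{\NumSet}(\num) \geq \sum_j \min(j, n-j)\, g_j$, and combining with the previous step gives $\cardin{\MdPrcX{\NumSet}(\num) - \MdPrcX{\Coreset}(\num)} \leq (\eps/5)\MdPrcX{\NumSet}(\num)$; the claimed $\Approx{\eps/5}$ relation then follows from \obsref{approximate_transitive} (absorbing the factor of $2$ loss into a mild tightening of the constant in the chunking).

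The main obstacle is securing the inequality $w_j \leq (\eps/5)\min(j, n-j)$ uniformly across the two regimes where it is tight: near the ends, where the singleton buffer of size $\lowLim \geq 10/\eps$ is exactly what lets the geometric term $(\eps/10)\alpha_{i-1}$ absorb the additive $+1$ slack from the ceiling, and at the middle chunk $L_{\hiLim}$, where the geometric recurrence is truncated by the $n/2$ cap but the symmetric capping at the median still forces $n - j > \alpha_{i-1}$. Once that estimate is in hand, the rest is a mechanical gap-by-gap accounting.
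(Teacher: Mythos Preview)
Your argument is correct and follows essentially the same route as the paper: apply \lemref{trivial} chunk-by-chunk, bound each chunk's error by its size times its diameter, and compare this to the median cost using that $\cardin{L_i} \leq (\eps/10)\alpha_{i-1}$ while at least $\alpha_{i-1}$ points lie farther out. The only cosmetic difference is bookkeeping: the paper phrases the last step as a charging argument (each chunk interval $\Interval(R_i)$ is covered by at least $(10/\eps)\cardin{R_i}$ of the intervals $I_k = [z_k,\tau]$, so each unit of $\sum_k \normX{I_k} = \MdPrcX{\NumSet}(\tau)$ is charged at most $\eps/10$), whereas you unroll everything into gap weights and the explicit lower bound $\MdPrcX{\NumSet}(\num)\geq \sum_j \min(j,n-j)\,g_j$ --- which is nothing other than $\MdPrcX{\NumSet}(\tau)$, the same quantity the paper routes through.
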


\begin{proof}
    We claim that
    \[
        \cardin{\MdPrcX{\NumSet}(\num) - \MdPrcX{\Coreset}(\num) } \leq (\eps/10) \MdPrcX{\NumSet}(\num),
    \]
    for all $\num \in \Re$. Indeed, let $\tau$ be a median point of $\NumSet$ and observe that $\MdPrcX{\NumSet}( \tau)$ is a global minimum of this function. We have that
    \begin{align*}
      \Err
      &=
        \cardin{\MdPrcX{\NumSet}(\num) - \MdPrcX{\Coreset}(\num) }
        \leq%
        \sum_{i=1}^{\hiLim} \cardin{\MdPrcX{L_i}(\num) - \cardin{L_i}
           \cdot \cardin{l_i - \num}} +
        \sum_{i=1}^{\hiLim} \cardin{\MdPrcX{R_i}(\num) - \cardin{R_i}
           \cdot \cardin{r_i - \num}}\\
        &=
        \sum_{i=\lowLim+1}^{\hiLim} \cardin{\MdPrcX{L_i}(\num) - \cardin{L_i}
           \cdot \cardin{l_i - \num}}
        +
        \sum_{i=\lowLim+1}^{\hiLim} \cardin{\MdPrcX{R_i}(\num) - \cardin{R_i}
           \cdot \cardin{r_i - \num}}\\
        &\leq
        \sum_{i=\lowLim+1}^{\hiLim} \MdPrcX{L_i}(l_i) +
        \sum_{i=\lowLim+1}^{\hiLim} \MdPrcX{R_i}(r_i),
    \end{align*}
    by \lemref{trivial}.

    Observe that by construction $\cardin{R_i} \leq (\eps/10) \cardin{R_1 \cup \ldots \cup R_{i-1}}$, for $i > \lowLim$. We claim that $\sum_{i=\lowLim+1}^{\hiLim} \MdPrcX{L_i}(l_i) + \sum_{i=\lowLim+1}^{\hiLim} \MdPrcX{R_i}(r_i) \leq (\eps/10)\MdPrcX{\NumSet}(\tau)$. To see this, for each point of $z_i \in \NumSet$, let $I_i$ be the interval with $z_i$ in one endpoint and the median $\tau$ in the other endpoint.  The total length of those intervals is $\MdPrcX{\NumSet}(\tau)$. Let $\IntervalSet = \brc{I_1,\ldots, I_n}$.

    Consider the interval $\Interval_i =\Interval(R_i)$ which is the shortest interval containing the points of $R_i$, for $i=\lowLim+1, \ldots, \hiLim$. Clearly, we have $\MdPrcX{R_i}(r_i) \leq \cardin{R_i} \cdot \normX{\Interval_i}$.

    On the other hand, the number of intervals of $\IntervalSet$ completely covering $\Interval_i$ is at least $(10/\eps)\cardin{R_i}$, for $i=\lowLim+1,\ldots, \hiLim$. As such, we can charge the total length of $\MdPrcX{R_i}(r_i)$ to the portions of those intervals of $\IntervalSet$ covering $\Interval_i$.  Thus, every unit of length of the intervals of $\IntervalSet$ gets charged at most $\eps/10$ units.

    This implies that the error $\Err \leq (\eps/10) \MdPrcX{\NumSet}(\tau) \leq (\eps/10) \MdPrcX{\NumSet}(\num)$, which establishes the lemma, by \obsref{approximate_transitive}.
\end{proof}

Next, we ``slightly'' perturb the points of the coreset $\Coreset$.  Formally, assume that we have points $l_1',\ldots, l_\hiLim', r_1', \ldots, r_\hiLim'$ such that $\cardin{l_i' - l_i} ,\cardin{r_i' - r_i} \leq (\eps/20)\cardin{l_i - r_i}$, for $i=1,\ldots, N$. Let $\CoresetC = \brc{l_1',\ldots, l_\hiLim', r_\hiLim', \ldots, r_1'}$ be the resulting weighted set.  We claim that $\CoresetC$ is still a good coreset.

\begin{lemma}
    \lemlab{snap}%
    It holds that $\MdPrcX{\NumSet}(\num) \ApproxEps{\eps} \MdPrcX{\CoresetC}(\num)$, for any $\num\in \Re$.  Namely, $\CoresetC$ is an $\eps$-coreset for $\NumSet$.
\end{lemma}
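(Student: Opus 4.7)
The plan is to bound the change in the price function caused by perturbing $\Coreset$ to $\CoresetC$, and then combine the resulting estimate with \lemref{1_d_coreset} via the transitivity clause of \obsref{approximate_transitive}.

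First I would apply the reverse triangle inequality term by term: for every $\num \in \Re$,
\[
  \Bigl| \Weight{i}\cardin{l_i - \num} - \Weight{i}\cardin{l_i' - \num} \Bigr|
  \leq \Weight{i} \cardin{l_i - l_i'}
  \leq \frac{\eps}{20}\, \Weight{i}\cardin{l_i - r_i},
\]
and the analogous bound holds for the $r_i \to r_i'$ perturbation. Summing over $i$ would give
\[
  \cardin{\MdPrcX{\Coreset}(\num) - \MdPrcX{\CoresetC}(\num)}
  \leq \frac{\eps}{10} \sum_{i=1}^{\hiLim} \Weight{i} \cardin{l_i - r_i}.
\]

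The key remaining step is to charge $\sum_i \Weight{i}\cardin{l_i - r_i}$ back to $\MdPrcX{\Coreset}(\num)$ itself. Since $l_i$ lies in the left chunk $L_i$ and $r_i$ lies in the right chunk $R_i$, the triangle inequality through the evaluation point $\num$ gives $\cardin{l_i - r_i} \leq \cardin{l_i - \num} + \cardin{\num - r_i}$, so
\[
  \sum_{i=1}^{\hiLim} \Weight{i}\cardin{l_i - r_i}
  \leq \sum_{i=1}^{\hiLim} \Weight{i}\pth{ \cardin{l_i - \num} + \cardin{r_i - \num} }
  = \MdPrcX{\Coreset}(\num).
\]
Plugging this back in yields $\cardin{\MdPrcX{\Coreset}(\num) - \MdPrcX{\CoresetC}(\num)} \leq (\eps/10)\, \MdPrcX{\Coreset}(\num)$, and then part (ii) of \obsref{approximate_transitive} upgrades this to $\MdPrcX{\Coreset}(\num) \Approx{\eps/5} \MdPrcX{\CoresetC}(\num)$.

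Finally, I would compose this with the conclusion of \lemref{1_d_coreset}, namely $\MdPrcX{\NumSet}(\num) \Approx{\eps/5} \MdPrcX{\Coreset}(\num)$, using part (i) of \obsref{approximate_transitive}. This gives $\MdPrcX{\NumSet}(\num) \Approx{3\eps/5} \MdPrcX{\CoresetC}(\num)$, which in particular is an $(1\pm\eps)$-approximation, establishing that $\CoresetC$ is an $\eps$-coreset. I expect no real obstacle here: the nontrivial structural work was already done in \lemref{1_d_coreset}, and the main subtlety in this lemma is just recognizing that the ``crossing'' length $\cardin{l_i - r_i}$ can always be paid for by $\MdPrcX{\Coreset}(\num)$, regardless of where $\num$ sits on the line, because any query point is to one side of either $l_i$ or $r_i$.
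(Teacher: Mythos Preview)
Your proposal is correct and follows essentially the same route as the paper's proof: bound the perturbation error termwise via the reverse triangle inequality to get $(\eps/10)\sum_i \Weight{i}\cardin{l_i - r_i}$, then absorb $\sum_i \Weight{i}\cardin{l_i - r_i}$ into $\MdPrcX{\Coreset}(\num)$ via the triangle inequality through $\num$ (using $\cardin{L_i}=\cardin{R_i}$), and finish by combining with \lemref{1_d_coreset} through \obsref{approximate_transitive}. The paper's write-up is virtually identical, differing only in the order of presentation.
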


\begin{proof}
    By \lemref{1_d_coreset} and by the triangle inequality, we have
    \begin{eqnarray*}
        \MdPrcX{\Coreset}(\num) &=& \sum_i \pth{ \cardin{L_i}
           \cdot \cardin{l_i - \num } + \cardin{R_i}\cdot \cardin{ r_i -
              \num} }
        \geq
        \sum_i { \cardin{L_i}
           \cdot \cardin{l_i - r_i} },
    \end{eqnarray*}
    since for all $i$ we have $\cardin{L_i} = \cardin{R_i}$.  Also, by the triangle inequality
    \begin{eqnarray*}
      \cardin{\cardin{l_i - \num} - \cardin{l_i' - \num} \bigl.} \leq \cardin{l_i - l_i'}.
    \end{eqnarray*}
    Thus
    \begin{align*}
      \cardin{\MdPrcX{\Coreset}(\num) - \MdPrcX{\CoresetC}(\num)}
      &\leq%
        \sum_i \cardin{L_i} \cdot \cardin{l_i - l_i'} + \sum_i \cardin{R_i}
        \cardin{r_i - r_i'}
        \\&
        \leq%
        2
        \sum_i \cardin{L_i} \frac{\eps}{20} \cdot \cardin{l_i -
        r_i}
        \leq \frac{\eps}{10} \nu_\Coreset(\num).
    \end{align*}
    Thus $\CoresetC$ is an $\eps/5$-coreset of $\Coreset$, which is in turn an $\eps$-coreset for $\NumSet$, by \obsref{approximate_transitive}.
\end{proof}

\begin{remark}
    \remlab{hardness}%
    The advantage of the scheme used in \lemref{snap} over the constructions of \cite{hm-ckmkm-04,hk-sckmk-07} is that the new framework is more combinatorial and therefore it is more flexible.  In particular, the construction can be done in an oblivious way without knowing (even approximately) the optimal value of the $1$-median clustering.  This is in contrast to the previous constructions that are based on the partition of the line into intervals of prespecified length that depend on the value of the optimal solution. As such, they can not be easily extended to handle noise and approximation.  The flexibility of the new construction is demonstrated in the following section.
\end{remark}

\subsection{Variants}

Let $f: \Re^+ \rightarrow \Re^+$ be a monotone strictly increasing function (e.g., $f(x) =x^2$). Consider the function
\[
U_\NumSet(\num) = \sum_{x \in \NumSet} f\pth{ \bigl.\cardin{x -
      \num}}.
\]
We claim that the set $\Coreset$ constructed in \lemref{1_d_coreset} is also a coreset for $U_\NumSet(\cdot)$. Namely, $U_\NumSet(\num) \Approx{\eps/5} U_\Coreset(\num) = \sum_{x \in \Coreset} \Weight{x} f\pth{ \bigl. \cardin{x - \num}}$.  To this end, map each point $x$ of $\NumSet$ to a point of distance $f\pth{ \bigl. \cardin{x - \num}}$ from $\num$ (preserving the side of $\num$ on which the point $x$ lies), and let $g_\num:\NumSet \rightarrow \Re$ denote this mapping. Let the resulting set be $Q= f(\NumSet)$.  Clearly, $U_\NumSet(\num) = \MdPrcX{Q}(\num)$, and let $\CoresetB$ be the coreset constructed for $Q$ by \lemref{1_d_coreset}.  Observe that $\CoresetB = g_\num(\Coreset)$, since the construction of the coreset cares only about the ordering of the points, and the ordering is preserved when mapping between $\NumSet$ and $Q$. Thus, we have that $U_\NumSet(\num) = \MdPrcX{Q}(\num) \Approx{\eps/5} \MdPrcX{\CoresetB}(\num) = U_\Coreset(\num)$, as required.

This in particular implies that $\SqPrcX{\NumSet}(\num) \Approx{\eps/5} \SqPrcX{\Coreset}(\num)$, for any $\num \in \Re$, where $\SqPrcX{\NumSet}(\num) = \sum_{x \in \NumSet} \cardin{\num - x}^2$.  In this case, even the modified coreset $\CoresetC$ is still a coreset.

\begin{lemma}
    \lemlab{snap_square}%
    It holds that $\SqPrcX{\NumSet}(\num) \ApproxEps{\eps} \SqPrcX{\CoresetC}(\num)$, for any $\num \in \Re$.  Namely, $\CoresetC$ is an $\eps$-coreset of $\NumSet$ for the $\SqPrcChar(\cdot)$ function.
\end{lemma}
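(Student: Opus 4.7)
The plan is to follow the template of \lemref{snap}, but now track the extra cross-terms that appear when expanding a squared difference. The starting point is the variant noted just above this lemma (with $f(x)=x^2$), which gives $\SqPrcX{\NumSet}(\num) \Approx{\eps/5} \SqPrcX{\Coreset}(\num)$ for every $\num \in \Re$. So it suffices to show $\SqPrcX{\Coreset}(\num) \Approx{O(\eps)} \SqPrcX{\CoresetC}(\num)$ and then chain via \obsref{approximate_transitive}, shrinking the working $\eps$ by a suitable constant at the outset.

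To control the chunk-wise error, I would write $(l_i - \num)^2 - (l_i' - \num)^2 = (l_i - l_i')(l_i + l_i' - 2\num)$, so the difference is at most $\cardin{l_i - l_i'}\cdot (2\cardin{l_i - \num} + \cardin{l_i - l_i'})$. An AM-GM split yields
\[
2 \cardin{l_i - \num}\cdot \cardin{l_i - l_i'} \;\leq\; \tfrac{\eps}{10} (l_i - \num)^2 + \tfrac{10}{\eps}(l_i - l_i')^2,
\]
and the hypothesis $\cardin{l_i - l_i'} \leq (\eps/20)\cardin{l_i - r_i}$ turns the second summand into $O(\eps)\cardin{l_i - r_i}^2$. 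The same bound is applied symmetrically to $r_i$ versus $r_i'$.

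The critical geometric observation is that because $l_i$ and $r_i$ lie on opposite sides of the median of $\NumSet$, the triangle inequality gives $\cardin{l_i - r_i}^2 \leq 2(l_i - \num)^2 + 2(r_i - \num)^2$ uniformly in $\num$. Summing over chunks and using $\cardin{L_i} = \cardin{R_i}$, I obtain $\sum_i \cardin{L_i}\cardin{l_i - r_i}^2 \leq 2\,\SqPrcX{\Coreset}(\num)$. Combining this with the AM-GM split and the analogous $R_i$ estimate, I conclude
\[
\cardin{\SqPrcX{\Coreset}(\num) - \SqPrcX{\CoresetC}(\num)} \;\leq\; O(\eps)\,\SqPrcX{\Coreset}(\num),
\]
and part (ii) of \obsref{approximate_transitive} converts this into a $(1\pm O(\eps))$-approximation. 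Composing with $\SqPrcX{\NumSet}(\num) \Approx{\eps/5} \SqPrcX{\Coreset}(\num)$ via part (i) finishes the job after a final rescaling of $\eps$.

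The step I expect to be the main obstacle is the AM-GM split: the naive bound $\cardin{(l_i - \num)^2 - (l_i' - \num)^2} \leq \cardin{l_i - l_i'} \cdot O\pth{\cardin{l_i - \num}}$ deteriorates multiplicatively when $\cardin{l_i - \num}$ is small, and the guarantee $\cardin{l_i - l_i'} \leq (\eps/20)\cardin{l_i - r_i}$ is only stated relative to the chunk width $\cardin{l_i - r_i}$, not to the distance from $\num$. The AM-GM trick is what converts this local chunk-scale bound into a global multiplicative error: one piece is absorbed directly into $\SqPrcX{\Coreset}(\num)$, and the other is absorbed through the uniform inequality $\sum_i \cardin{L_i}\cardin{l_i - r_i}^2 \leq 2\,\SqPrcX{\Coreset}(\num)$, which is precisely the squared analogue of the length-counting argument driving \lemref{snap}.
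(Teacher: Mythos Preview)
Your argument is correct and rests on the same core observation as the paper, namely that $\cardin{l_i - r_i}$ is controlled by $\max(\cardin{l_i - \num},\cardin{r_i - \num})$; your inequality $\cardin{l_i - r_i}^2 \leq 2(l_i - \num)^2 + 2(r_i - \num)^2$ is just the squared version of this (and incidentally holds for all $\num$, so the remark about the median is unnecessary). Where you differ is in the bookkeeping: the paper dispenses with the AM--GM split entirely by first assuming without loss of generality that $\cardin{l_i - \num} \leq \cardin{r_i - \num}$, which gives $\cardin{l_i - r_i} \leq 2\cardin{r_i - \num}$ and hence $\cardin{l_i' - l_i},\cardin{r_i' - r_i} \leq (\eps/10)\cardin{r_i - \num}$; it then simply expands $\pth{\cardin{l_i' - l_i} + \cardin{l_i - \num}}^2 + \pth{\cardin{r_i' - r_i} + \cardin{r_i - \num}}^2$ and reads off a direct chunk-wise multiplicative bound $(1+\eps/3)\pth{(l_i - \num)^2 + (r_i - \num)^2}$. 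Your route reaches the same conclusion with an extra detour through a parameterized AM--GM; this buys nothing here but does no harm.
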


\begin{proof}
    Observe that, by the above discussion, $\SqPrcX{\NumSet}(\num) \Approx{\eps/5} \SqPrcX{\Coreset}(\num)$. On the other hand, fix $\num \in \Re$, and assume that $|l_i-\num| < |r_i-\num|$. This implies that $|r_i - \num| \geq |l_i - r_i|/2$, and we have
    \begin{align*}
      &\cardin{ l_i' - \num}^2 +
        \cardin{ r_i' - \num}^2
        \leq
        \pth{ \cardin{l_i' - l_i} + \cardin{ l_i - \num}}^2
        + \pth{ \cardin{r_i' - r_i} + \cardin{ r_i - \num}}^2
      \\
      &\qquad\leq%
        \pth{ (\eps/10) \cardin{r_i - \num} + \cardin{ l_i - \num}}^2
        + \pth{ (\eps/10)\cardin{r_i - \num} + \cardin{ r_i - \num}}^2
      \\
      &\qquad\leq%
        (\eps^2/100) \cardin{r_i - \num}^2
        + (\eps/5) \cardin{r_i - \num} \cardin{ l_i - \num}\\
      &\qquad\qquad +  \cardin{ l_i - \num}^2 +
        (1+\eps/10)^2 \cardin{ r_i - \num}^2
      \\&%
      \qquad\leq  (1+\eps/3)\pth{
      \cardin{ l_i - \num}^2 +
      \cardin{ r_i - \num}^2
      },
    \end{align*}
    since $\cardin{l_i' - l_i} ,\cardin{r_i' - r_i} \leq (\eps/20)\cardin{l_i - r_i}$. This implies that $\SqPrcX{\Coreset}(\num) \leq (1+\eps/3)\SqPrcX{\CoresetC}(\num)$.  By applying the same argument in the other direction, we have that $\SqPrcX{\Coreset}(\num) \Approx{\eps/3} \SqPrcX{\CoresetC}(\num)$, by \obsref{approximate_transitive} (iii). This in turn implies that $\SqPrcX{\CoresetC}(\num) \ApproxEps{\eps} \SqPrcX{\NumSet}(\num)$, as required.
\end{proof}

\section{The Reduction}
\seclab{reduction}

In this section, we show how to reduce the problem of approximating the $\MdPrcX{\SurfSet}(\cdot)$ function, for a set $\SurfSet$ of $n$ (unweighted) surfaces in $\Re^d$, to the problem of approximating the same function for a considerably smaller set of surface patches.

\secref{chunking} provides us with a general framework for how to get a small approximation. Indeed, pick any vertical line $\ell$, and consider its intersection points with the surfaces of $\SurfSet$.  Clearly, the function $\MdPrcX{\SurfSet}(\cdot)$ restricted to $\ell$ can be approximated using the construction of \secref{chunking}.  To this end, we need to pick levels in the way specified and assign them the appropriate weights. This would guarantee that the resulting function would approximate $\MdPrcX{\SurfSet}(\cdot)$ everywhere.

A significant difficulty in pursuing this direction is that the levels we pick have high descriptive complexity.  We circumnavigate this difficulty in two stages. In the first stage, we replace those levels with shallow levels by using random sampling. In the second stage, we approximate these shallow levels such that this introduces a small relative error.

\begin{defn}
    For a set $\SurfSet$ of $n$ surfaces in $\Re^d$, the \emph{level} of a point $x \in \Re^d$ in the arrangement $\Arr(\SurfSet)$ is the number of surfaces of $\SurfSet$ lying vertically below $x$.  For $k=0, \ldots, n-1$, let $\LVL_{\SurfSet, k}$ represent the surface which is the closure of all points on the surfaces of $\SurfSet$ whose level is $k$. We will refer to $\LVL_{\SurfSet,k}$ as the \emph{bottom $k$-level} or just the \emph{$k$-level} of $\SurfSet$.  We define the {\em top $k$-level} of $\SurfSet$ to be $\TLVL_{\SurfSet, k} = \LVL_{\SurfSet, n - k - 1}$, for $k=0,\ldots, n-1$.  Note that $\LVL_{\SurfSet, k}$ is a subset of the arrangement of $\SurfSet$.  For $x \in \Re^{d-1}$, we slightly abuse notation and define $\LVL_{\SurfSet, k}(x)$ to be the value $x_d$ such that $(x, x_d) \in \LVL_{\SurfSet, k}$.
\end{defn}

\begin{lemma}
    \lemlab{random_sample}%
    Let $\SurfSet$ be a set of $n$ surfaces in $\Re^d$, $0 < \delta < 1/4$, and let $k$ be a number between $0$ and $n/2$. Let $\probA = \min\pth{ c k^{-1}\delta^{-2} \log n, 1}$, and pick each surface of $\SurfSet$ into a random sample $\RSample$ with probability $\probA$, where $c$ is an appropriate constant. Then, with high probability, the $\kapprox$-level of $\Arr(\RSample)$ lies between the $(1-\delta)k$-level to the $(1+\delta)k$-level of $\Arr(\SurfSet)$, where $\kapprox = \probA k = O( \delta^{-2} \log n )$.

    In other words, we have $\LVL_{\SurfSet,(1-\delta)k} \preceq \LVL_{\RSample, \kapprox} \preceq \LVL_{\SurfSet, (1+\delta)k}$ and $\TLVL_{\SurfSet,(1+\delta)k} \preceq \TLVL_{\RSample, \kapprox} \preceq \TLVL_{\SurfSet, (1-\delta)k}$.
\end{lemma}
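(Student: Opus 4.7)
The plan is to prove this standard sampling result for levels via a pointwise Chernoff bound together with a union bound over the combinatorially many ``witness'' points determined by the arrangement of $\SurfSet$. I will prove the first containment $\LVL_{\SurfSet,(1-\delta)k} \preceq \LVL_{\RSample,\kapprox}$; the other three are completely symmetric (swap $1-\delta$ with $1+\delta$, or pass to the top level by reflecting).

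First, I fix an arbitrary point $x \in \Re^{d-1}$ and let $y = \LVL_{\SurfSet,(1-\delta)k}(x)$, so that exactly $(1-\delta)k$ surfaces of $\SurfSet$ pass vertically below the point $(x,y)$ (I will implicitly round $(1-\delta)k$ to an integer; this is harmless). Call this set of surfaces $\SurfSet_{x,y}$. The number of surfaces of $\RSample$ that lie below $(x,y)$ equals the number of elements of $\SurfSet_{x,y}$ that were selected into $\RSample$; this is a sum of $(1-\delta)k$ independent Bernoulli trials with common success probability $\probA$, whose expectation is $\mu = (1-\delta)\probA k = (1-\delta)\kapprox$. The event that the $\kapprox$-level of $\RSample$ lies above $(x,y)$ is exactly the event that at most $\kapprox$ surfaces of $\RSample$ lie below $(x,y)$, i.e.\ that this binomial is at most $\kapprox = \mu/(1-\delta)$. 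Since $\kapprox - \mu = \delta \kapprox$, a Chernoff bound yields a failure probability of at most $\exp(-\Omega(\delta^2 \kapprox)) = \exp(-\Omega(\log n))$; choosing the constant $c$ in $\probA$ sufficiently large makes this at most $1/n^{C}$ for any desired constant $C$.

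Second, I apply a union bound. The key observation is that, as $x$ varies over $\Re^{d-1}$, the set $\SurfSet_{x,y}$ (with $y = \LVL_{\SurfSet,(1-\delta)k}(x)$) is locally constant: it can only change when $x$ crosses the vertical projection of a vertex of the arrangement $\Arr(\SurfSet)$. Since the surfaces have constant descriptive complexity and $d$ is fixed, this projection has combinatorial complexity polynomial in $n$, so it defines only $n^{O(1)}$ distinct cells; by picking one representative $x$ per cell and taking a union bound over the Chernoff failure events above (with $C$ large enough relative to the exponent in $n^{O(1)}$), the deviation bound holds simultaneously at every such representative with high probability. Because the bound is a pointwise statement about fixed subsets of $\SurfSet$, the inequality then holds for every $x$, giving $\LVL_{\SurfSet,(1-\delta)k} \preceq \LVL_{\RSample,\kapprox}$.

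The only real obstacle is being careful with the union bound, since one is tempted to quantify naively over all $x \in \Re^{d-1}$. The resolution is exactly that the relevant Bernoulli trials depend only on which surfaces lie below a given point, and this ``below'' relationship is a combinatorial invariant of the cell of $\Arr(\SurfSet)$ containing the point; hence only $n^{O(1)}$ distinct probabilistic events arise. The reverse inequality $\LVL_{\RSample,\kapprox} \preceq \LVL_{\SurfSet,(1+\delta)k}$ is proved by starting from $y' = \LVL_{\SurfSet,(1+\delta)k}(x)$ and using the lower tail of the Chernoff inequality to conclude that at least $\kapprox$ of the $(1+\delta)k$ surfaces below $(x,y')$ are sampled; the two corresponding claims for the top levels $\TLVL$ are obtained by applying the bottom-level statement to the set of vertically reflected surfaces.
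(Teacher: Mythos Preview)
Your proof is correct and follows essentially the same approach as the paper: a pointwise Chernoff bound at a fixed vertical line, combined with a union bound over the $n^{O(1)}$ combinatorially distinct vertical orderings of the surfaces. The only cosmetic difference is that the paper first proves the upper containment $\LVL_{\RSample,\kapprox} \preceq \LVL_{\SurfSet,(1+\delta)k}$ via the lower Chernoff tail at the $(1+\delta)k$-level, whereas you start from the $(1-\delta)k$-level and use the upper tail; the arguments are symmetric.
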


\begin{proof}
    The claim follows readily from the Chernoff inequality, and the proof is provided only for the sake of completeness.

    Consider a vertical line $\ell$ passing through a point $\pnt \in \Re^{d-1}$, and let $X_i$ be an indicator variable which is $1$ if the $i$th surface intersecting $\ell$ (from the bottom) was chosen for the random sample $\RSample$. Let $Y = \sum_{i=1}^{(1+\delta)k} X_i$ be the random variable which is the level of the point $\pth{\pnt, \LVL_{\SurfSet,(1+\delta)k}(\pnt)}$ in the resulting arrangement $\Arr(\RSample)$.

    Let $\mu = \Ex{Y} = \probA (1+\delta)k$. We have by the Chernoff inequality that
    \begin{align*}
      \Prob{ Y < \probA k }
      &\leq%
        \Prob{ Y < (1-\delta/2) \mu }\leq \exp
        \pth{ -\mu \delta^2 /8} = \exp\pth{ - \frac{(1+\delta)c}{8} \log n
        }
      \\&
      =%
      \frac{1}{n^{O(1)}},
    \end{align*}
    by choosing $c$ to be large enough. There are only $n^{O(1)}$ combinatorially different orderings of the surfaces of $\SurfSet$ along a vertical line. As such, we can make sure that, with high probability, the $\kapprox$ level in $\RSample$ (which is just a surface) lies below the $(1+\delta)k$ level of $\SurfSet$.

    A similar argument shows that, with high probability, the $(1-\delta)k$ level of $\SurfSet$ lies below the $\kapprox$ level of $\RSample$.
\end{proof}

\lemref{random_sample} suggests that instead of picking a specific level in a chunk of levels, as done in \secref{chunking}, we can instead pick a level, which is a shallow level of the appropriate random sample, and with high probability this level lies inside the allowable range.  The only problem is that even this shallow level might (and will) have unreasonable complexity. We rectify this by doing a direct approximation of the shallow levels.

\begin{defn}
    Let $\SurfSet$ be a set of surfaces in $\Re^d$.  The \emph{$(k,r)$-extent} $\EXT{\SurfSet}{k}{r} : \Re^{d-1} \rightarrow \Re$ is defined as the vertical distance between the bottom $r$-level and the top $k$-level of $\Arr(\SurfSet)$, i.e., for any $x \in \Re^{d-1}$, we have
    \[
    \EXT{\SurfSet}{k}{r} (x) = \TLVL_{\SurfSet, k}(x) -
    \LVL_{\SurfSet, r}(x).
    \]
\end{defn}

\begin{defn}[\cite{hw-sfo-04}]
    Let $\FuncSet$ be a set of non-negative functions defined over $\Re^{d-1}$.  A subset $\FuncSet' \subseteq \FuncSet$ is \emph{$(k,\eps)$-sensitive} if for any $r \leq k$ and $x \in \Re^{d-1}$, we have
    \[
        \LVL_{\FuncSet,r}(x) \leq \LVL_{\FuncSet',r}(x) \leq
        \LVL_{\FuncSet,r}(x) +
        \frac{\eps}{2}\EXT{\FuncSet}{k}{r}(x) ; ~~~~\text{and}
    \]
    \[
        \TLVL_{\FuncSet,r}(x) -
        \frac{\eps}{2}\EXT{\FuncSet}{r}{k}(x)
        \leq
        \TLVL_{\FuncSet',r}(x) \leq
        \TLVL_{\FuncSet,r}(x).
    \]
\end{defn}

We need the following result of Har-Peled and Wang \cite{hw-sfo-04}.  It states that for well well-behaved set of functions, one can find a small subset of the functions such that the vertical extent of the subset approximates the extent of the whole set. This holds only for ``shallow'' levels $\leq k$. In our application $k$ is going to be about $O(\eps^{-2} \log n)$. Here is the legalese:
\begin{theorem}[\cite{hw-sfo-04}]
    \thmlab{d_sensitive} %
    Let $\FuncSet = \brc{p_1^{1/2}, \ldots, p_n^{1/2}}$ be a family of $d$-variate functions defined over $\Re^d$, where $p_i$ is a $d$-variate polynomial, for $i=1,\ldots, n$.  Given $k$ and $0 < \eps < 1$, one can compute, in $O(n+ k/\eps^{2\LinDim})$ time, a subset $\FuncSet' \subseteq \FuncSet$, such that, with high probability, $\FuncSet'$ is $(k,\eps)$-sensitive for $\FuncSet$, and $\cardin{\FuncSet'} = O(k/\eps^{2\LinDim})$, where $\LinDim$ is the linearization dimension of the polynomials of $\FuncSet$.
\end{theorem}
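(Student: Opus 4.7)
The plan is to reduce the problem to one about arrangements of hyperplanes via the linearization map, and then construct the $(k,\eps)$-sensitive subset by a random-sampling scheme whose sample size is tuned by Chernoff plus a union bound, in the spirit of \lemref{random_sample}.

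First, apply the linearization map $\LinMap : \Re^d \to \Re^\LinDim$ so that each polynomial becomes a linear function $h_i$ on $\Re^\LinDim$ with $p_i(x) = h_i(\LinMap(x))$. Since $t \mapsto \sqrt{t}$ is strictly monotone on $\Re^+$, the vertical ordering of $p_1^{1/2}(x), \dots, p_n^{1/2}(x)$ matches the ordering of the linear values $h_1, \dots, h_n$ at $\LinMap(x)$; hence the $r$-level of $\FuncSet$ at $x$ corresponds to the $r$-level of $\Arr(\brc{h_1, \dots, h_n})$ at $\LinMap(x)$, and the extent function $\EXT{\FuncSet}{k}{r}$ is related by a monotone (and locally Lipschitz) transformation to the analogous extent in the linearized picture. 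The problem thus reduces to building a $(k,\eps)$-sensitive subset of $n$ hyperplanes in $\Re^\LinDim$.

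Next, build $\FuncSet'$ by sampling hyperplanes independently at a rate $q$ chosen so that the expected number of sampled hyperplanes in any strip between the $r$-level and the $(1+\eps/2)r$-level of the original arrangement is $\Omega(\log n / \eps^2)$ for every $r \le k$. At any fixed vertical line, Chernoff bounds the probability of this count deviating from its expectation by more than an $\eps/4$ fraction by $n^{-\Omega(1)}$. The key combinatorial input is that any arrangement of $n$ hyperplanes in $\Re^\LinDim$ has only $O(n^\LinDim)$ combinatorially distinct vertical orderings, so a union bound over these orderings, together with a bucketing of $r$ into $O(\log k)$ geometric scales, certifies $(k,\eps)$-sensitivity uniformly in $x$ and $r$ with high probability. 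Calibrating $q$ so that this union bound closes yields $\cardin{\FuncSet'} = O(k/\eps^{2\LinDim})$, where the exponent $2\LinDim$ reflects the $\LinDim$-fold refinement of the sampling rate needed to absorb the $n^\LinDim$ combinatorial factor; equivalently, one may iterate the sampling argument $\LinDim$ times, losing an $\eps^2$ factor at each round. The stated running time then follows: $O(n)$ for the sampling pass plus $\poly(k/\eps^{2\LinDim})$ for bookkeeping on the sample.

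The main obstacle is bridging two gaps simultaneously. The first is converting the \emph{absolute} Chernoff deviation into the \emph{relative} deviation ``at most $(\eps/2)\EXT{\FuncSet}{k}{r}(x)$'' demanded by $(k,\eps)$-sensitivity; the geometric bucketing of $r$ essentially fixes the extent to a known constant within each bucket, so an absolute bound within the bucket implies the required relative bound. The second is the uniform quantifier over all $x \in \Re^{d-1}$ and all $r \le k$, which is exactly what forces the combinatorial union bound and the $\eps^{-2\LinDim}$ sample-size factor. Once both gaps are bridged, the bound on $\cardin{\FuncSet'}$ and the running time follow from routine accounting.
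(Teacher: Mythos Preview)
This theorem is not proved in the present paper at all; it is quoted verbatim from Har-Peled and Wang \cite{hw-sfo-04} and used as a black box. So there is no ``paper's own proof'' to compare against here, and your proposal is an attempt to reprove an external result.

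That said, your proposal has a genuine gap. A single random sample cannot yield $(k,\eps)$-sensitivity in the sense defined just before the theorem. If $\FuncSet'$ is obtained by keeping each function with probability $q$, then for a subset $\FuncSet' \subseteq \FuncSet$ the $r$-level of $\FuncSet'$ tracks roughly the $(r/q)$-level of $\FuncSet$, not the $r$-level; the inequality $\LVL_{\FuncSet',r}(x) \le \LVL_{\FuncSet,r}(x) + \tfrac{\eps}{2}\EXT{\FuncSet}{k}{r}(x)$ is a bound on \emph{vertical distance}, not on level index, and there is no reason the $(r/q)$-level and the $r$-level of $\FuncSet$ are vertically close relative to the extent. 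Your Chernoff argument controls how many sampled functions lie between two fixed levels of $\FuncSet$, which is the wrong quantity; it says nothing about where $\LVL_{\FuncSet',r}$ actually sits. The hand-wave ``iterate the sampling argument $\LinDim$ times, losing an $\eps^2$ factor at each round'' does not correspond to any mechanism that fixes this.

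A second problem is the linearization step. The ordering of the $p_i^{1/2}$ matches that of the $h_i$, so levels correspond, but the \emph{extent} does not transform by a single monotone map: $\sqrt{b}-\sqrt{a} = (b-a)/(\sqrt{a}+\sqrt{b})$ depends on both endpoints, so an $\eps$-relative guarantee on the linear extent does not directly give one on the square-root extent. The actual construction in \cite{hw-sfo-04} is not a raw random sample; it builds on the extent-coreset machinery of Agarwal, Har-Peled, and Varadarajan \cite{ahv-aemp-04} (which already handles the square-root case via linearization) and layers it across the shallow levels, which is where the $O(k/\eps^{2\LinDim})$ bound comes from.
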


Intuitively, \thmref{d_sensitive} states that shallow levels of depth at most $k$ have an approximation of size polynomial in $k$ and $1/\eps$, and matching bottom/top $k$ levels have their mutual distances preserved up to a small multiplicative factor.

\paragraph{The construction.}
We partition the levels of $\Arr(\SurfSet)$ into chunks, according to the algorithm of \secref{chunking}, setting $\lowLim = O( (\log n)/\eps^2 )$. The first top/bottom $\lowLim$ levels of $\Arr(\SurfSet)$ we approximate directly by computing a set $\Coreset_0$ which is $(\lowLim,\eps/20)$-sensitive for $\SurfSet$, using \thmref{d_sensitive}. Next, compute the $i$th bottom (resp., top) level of $\Arr(\Coreset_0)$, for $i=0,\ldots, \lowLim$, and let $\surface_i$ (resp., $\surfaceB_i$) denote those levels. We assign weight one to each such surface.

For every pair of chunks of levels $L_i$ and $R_i$ from \secref{chunking}, for $i =\lowLim+1, \ldots, \hiLim$, we compute an appropriate random sample $\RSample_i$. We remind the reader that $L_i$ spans the range of levels from $\alpha_{i-1}+1$ to $(1+\eps/10)\alpha_{i-1}$; see \secref{chunking}. As such, if we want to find a random level that falls inside this range, we need to set $\delta = \eps/40$ and $k= (1+\eps/20)\alpha_{i-1}$, and now apply \lemref{random_sample}, which results in a random set $\RSample_i$, such that level $l_i = O( \eps^{-2} \log n )$ of $\Arr(\RSample_i)$ lies between level $\alpha_{i-1}+1$ and $(1+\eps/10)\alpha_{i-1}$ of $\Arr(\SurfSet)$.  We now approximate the top $l_i$-level and bottom $l_i$-level of $\Arr(\RSample_i)$ by applying \thmref{d_sensitive} to $\RSample_i$. This results in a set $\Coreset_{i}$ of size $O(\poly(\log n, 1/\eps) )$ of surfaces, such that the extent of the top/bottom $l_i$ levels of $\Arr(\Coreset_{i-M})$, is an $(1 \pm \eps/40)$-approximation to the extent of the top/bottom $l_i$ levels in $\Arr(\RSample_i)$. We extract the bottom $l_i$ level and top $l_i$ level of $\Arr(\RSample_i)$. Let the two resulting surfaces be denoted by $\surface_i$ and $\surfaceB_i$, respectively, and assign them weight $\cardin{R_i}$, for $i=\lowLim+1,\ldots, \hiLim$.

Note that $\surface_i$ and $\surfaceB_i$ no longer have constant complexity, but their complexity is bounded by $O( \poly( \log n, 1/\eps) )$. Let $\SurfSetB = \brc{\surface_1, \surfaceB_1, \ldots, \surface_\hiLim, \surfaceB_\hiLim}$ be the resulting set of weighted surfaces, and observe that the complexity of the arrangement $\Arr(\SurfSetB)$ is $O( \poly( \log n, 1/\eps) )$. Furthermore, the analysis of \secref{chunking} implies that $\MdPrcX{\SurfSet}(\pnt) \ApproxEps{\eps} \MdPrcX{\SurfSetB}(\pnt)$, for any point $\pnt \in \Re^d$.

\paragraph{Implementation details.}
To get a linear running time, we need to carefully implement the above algorithm. First, observe that we computed $O(\eps^{-1} \log n )$ random samples $\RSample_{\lowLim+1}, \ldots, \RSample_{\hiLim}$.  Observe that if two random samples are generated by sampling every surface with probabilities that are similar (up to a factor of two), then we can use the same random sample. Thus, we need to generate random samples only for probabilities which are powers of two (implying that only $O( \log n)$ random samples are needed).  In particular, let $\RSampleB_i$ be a random sample generated by picking each surface of $\SurfSet$ with probability $1/2^{i}$.

To perform this sampling quickly, we generate the $(i+1)$th random sample by picking each surface of $\RSampleB_{i}$ into $\RSampleB_{i+1}$ with probability half (the sequence of random samples $\SurfSet = \RSampleB_0, \RSampleB_1,\ldots, \RSampleB_{O(\log n)}$ is sometimes referred to as a \emph{gradation}).  Namely, each $\RSampleB_i$ serves as a replacement for a sequence of random samples $\RSample_{\alpha},\ldots \RSample_{\beta}$ which were generated using similar probabilities, where $\alpha$ and $\beta$ are functions of $i$.

Next, we need to approximate the ``shallow'' levels of $\RSample_i$ up to level $\xi = O ( \max(l_{j_i},\ldots,$ $l_{j_{i+1}-1} ) )$ $= O( \eps^{-2} \log n )$. Again, we are performing the computation of the shallow levels for a batch of samples of $\RSample$ using a single sample of $\RSampleB$ (i.e., will approximate the top/bottom $O(\xi)$-levels of $\RSampleB_i$ and this would supply us with the surfaces approximating all the required levels in $\RSample_{\alpha},\ldots \RSample_{\beta}$).  Using \thmref{d_sensitive}, this takes $O( \cardin{\RSampleB_i} + \poly(\log n, 1/\eps ) )$ time. By the Chernoff inequality, with high probability, we have $\cardin{\RSampleB_{i}} = O( n/2^i)$. Thus, the overall running time, with high probability, is $\sum_i O( n/2^i + \poly(\log n, 1/\eps)) = O(n + \poly(1/\eps, \log n))$. Putting everything together, we have:

\begin{theorem}
    \thmlab{reduction}%
    Given a set of $n$ unweighted surfaces $\SurfSet$ in $\Re^d$ (i.e., as in \secref{problem_definition}), and a parameter $\eps$, one can compute a set $\SurfSetB$ of surface patches, such that each patch is a portion of a surface of $\SurfSet$ which is defined over a region in $\Re^{d-1}$ (such a region is a semi-algebraic set of constant descriptive complexity). The number of surface patches is $O( \poly(1/\eps, \log n) )$. Furthermore, $\MdPrcX{ \SurfSet }( \pnt ) \ApproxEps{ \eps } \MdPrcX{ \SurfSetB }(\pnt)$ and $\SqPrcX{\SurfSet}(\pnt) \ApproxEps{ \eps } \SqPrcX{\SurfSetB}( \pnt )$, for any point $\pnt \in \Re^d$. The algorithm takes $O(n + \poly(\log n, 1/\eps ) )$ time, and it succeeds with high probability.

    The total weight of surfaces intersecting any vertical line $\ell$
    is equal to $\cardin{\SurfSet}$.
\end{theorem}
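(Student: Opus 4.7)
The plan is to verify that the construction already described in this section realizes all the promised guarantees, by decomposing the analysis into a correctness part, a size/complexity part, and a running-time part. The backbone of correctness is to reduce everything to what happens on an arbitrary vertical line $\ell$: on $\ell$, the set $\SurfSet$ induces a sorted multiset $\NumSet \subset \Re$ (the $x_d$-coordinates of the intersections), and $\MdPrcX{\SurfSet}(\pnt)$ with $\pnt\in\ell$ equals the one-dimensional $\MdPrcX{\NumSet}$ at the corresponding parameter. The constructed $\SurfSetB$ on $\ell$ will play the role of a weighted coreset $\CoresetC$ of the style built in \lemref{snap} and \lemref{snap_square}; once this is established for a \emph{fixed} $\ell$ with the $\eps/5$ slack absorbed by \obsref{approximate_transitive}, the statement follows for every $\pnt \in \Re^d$.

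First I would argue the per-line correctness. Consider one chunk index $i\in\{\lowLim+1,\dots,\hiLim\}$ and the corresponding $R_i$ (the argument for $L_i$ is symmetric). By \lemref{random_sample} applied with $\delta = \eps/40$ and $k = (1+\eps/20)\alpha_{i-1}$, with high probability the level $l_i=\kapprox$ of the sample $\RSample_i$ lies, on \emph{every} vertical line simultaneously, between the $\alpha_{i-1}+1$-level and the $(1+\eps/10)\alpha_{i-1}$-level of $\SurfSet$; that is, the point where the surface $\surfaceB_i$ would cross $\ell$ can serve as the representative $r_i$ chosen from $R_i$ in \secref{chunking}. Next, \thmref{d_sensitive} applied to $\RSample_i$ with accuracy $\eps/40$ replaces $\surfaceB_i$ by a surface of bounded combinatorial complexity whose vertical deviation from the true $l_i$-level of $\RSample_i$ is at most $(\eps/40)\EXT{\RSample_i}{l_i}{l_i}(x)$. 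Translating this vertical deviation back to $\ell$, it is bounded by $(\eps/20)\,|l_i-r_i|$, which is exactly the perturbation tolerated by \lemref{snap} and \lemref{snap_square}. Summing over all chunks and combining the two-layer approximation bounds through \obsref{approximate_transitive}, the final error is at most $\eps$, giving both $\MdPrcX{\SurfSet}(\pnt)\ApproxEps{\eps}\MdPrcX{\SurfSetB}(\pnt)$ and $\SqPrcX{\SurfSet}(\pnt)\ApproxEps{\eps}\SqPrcX{\SurfSetB}(\pnt)$. The total-weight claim is immediate from the construction: on every $\ell$, the weights assigned to $\surface_i,\surfaceB_i$ are precisely $|L_i|=|R_i|$, and $\sum_i(|L_i|+|R_i|) = |\SurfSet|$.

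The size bound and the patch description follow by bookkeeping. There are $\hiLim-\lowLim = O(\eps^{-1}\log n)$ chunks, each producing two surfaces $\surface_i,\surfaceB_i$ of complexity $O(\poly(\log n,1/\eps))$ coming from \thmref{d_sensitive}; thus $\SurfSetB$ consists of $O(\poly(\log n,1/\eps))$ patches, each a portion of a single input surface of $\SurfSet$ defined over a semi-algebraic region of constant descriptive complexity (the cells of the arrangement of the $(k,\eps)$-sensitive set projected to $\Re^{d-1}$).

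The one step that will require real care, and which I expect to be the main obstacle, is the \textbf{linear running time}. Naively, generating an independent random sample $\RSample_i$ for each of the $O(\eps^{-1}\log n)$ chunks and invoking \thmref{d_sensitive} on each would cost $\Theta(n\eps^{-1}\log n)$ just for the sampling stages. The remedy is to use the gradation $\SurfSet=\RSampleB_0\supseteq\RSampleB_1\supseteq\cdots$ with $\RSampleB_{i+1}$ obtained from $\RSampleB_i$ by independent coin flips, so that sampling all $O(\log n)$ levels of the gradation costs $\sum_i O(|\RSampleB_i|) = O(n)$ in expectation (and with high probability, using Chernoff on $|\RSampleB_i|=O(n/2^i)$). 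All the chunks whose sampling probabilities lie in a common factor-$2$ window share a single $\RSampleB_i$; on this $\RSampleB_i$ I apply \thmref{d_sensitive} once, to depth $\xi = O(\eps^{-2}\log n)$, which costs $O(|\RSampleB_i|+\poly(\log n,1/\eps))$. Summing over $i$ yields the claimed $O(n+\poly(\log n,1/\eps))$ bound, and a union bound over the $O(\log n)$ high-probability events (one per call to \lemref{random_sample} and \thmref{d_sensitive}) preserves overall success with high probability.
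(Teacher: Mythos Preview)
Your proposal is correct and follows essentially the same approach as the paper: you verify the construction given in the section by reducing correctness to a per-vertical-line analysis via \lemref{snap} and \lemref{snap_square}, invoke \lemref{random_sample} and \thmref{d_sensitive} with the same parameters, and handle the running time via the gradation trick, exactly as the paper does. If anything, your write-up makes the logical dependencies (especially the bound $|l_i'-l_i|\le(\eps/20)|l_i-r_i|$ needed to feed into \lemref{snap}) more explicit than the paper's terse ``the analysis of \secref{chunking} implies\ldots''.
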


The algorithm of \thmref{reduction} is a Monte-Carlo algorithm. In particular, it might fail with low probability. It is not clear if there is an efficient way to detect such a (low probability) failure.

\thmref{reduction} shows that given an instance of any of the problems defined in \secref{problem_definition}, we can quickly reduce the problem size to a small weighted set of surface patches. This, while beneficial, still leaves us with the mundane task of solving the problem on the reduced instance. Since we no longer have to care too much about efficiency, the problem becomes more manageable, and we tackle it in the next section.

\section{A Slow Approximation Algorithm}
\seclab{slow}

Let $\SurfSet$ be a set of $n$ weighted surface patches in $\Re^d$, such that any vertical line intersects surfaces with total weight $\totalWeight$.  In this section, we show how to solve any of the problems of \secref{problem_definition}.  A (roughly) quadratic time algorithm for the special case of a circle was given by Har-Peled and Koltun \cite{hk-so-05}, and the algorithm described here is somewhat similar to their algorithm. We demonstrate our algorithm for the case of approximating $\MdPrcX{\SurfSet}(\pnt) = \sum_{\surface \in \SurfSet} \Weight{\surface} \cdot \VDist(\pnt, \surface)$.

Let $\Decomp$ be the decomposition of $\Re^{d-1}$ into constant complexity cells, such that for each cell $\Delta \in \Decomp$, we have that any two vertical lines in $\Re^d$ intersecting $\Delta$ cross the same set of surface patches of $\SurfSet$.  Thus, $\Decomp$ induces a decomposition of $\Re^d$ into vertical prisms, such that we have to solve our problem inside each such prism.  The number of different prisms is $O\pth{n^{2\LinDim}}$, where $\LinDim$ is the linearization dimension of $\SurfSet$.  Now, we need to solve the problem inside each prism, for a weighted set of \emph{surfaces} (instead of surface patches).

So, consider a cell $\Delta \in \Decomp$ and let $\Prism$ denote the vertical prism that has $\Delta$ for a base.  Let $\SurfSetB$ be the resulting set of surfaces active in $\Prism$.  We compute, in $O(n)$ time, a vertical segment $\seg\subseteq \Prism$ that stabs all the surfaces of $\SurfSetB$, and its length is at most twice the length of the shortest vertical segment that intersects all the surfaces of $\SurfSetB$ inside $\Prism$. The algorithm of \cite{ahv-aemp-04} can be used here to this end.

The basic idea is to replace the ``unfriendly'' distance function $\VDist(\pnt, \surface)$, associated with $\surface \in \SurfSetB$, appearing in $\MdPrcX{\SurfSet}(\pnt)$ by its level-sets.  Namely, for each term in the summation of $\MdPrcX{\SurfSet}(\pnt) $, we will generate several level-sets, such that instead of computing $\VDist(\pnt, \surface)$, we will use the relevant level-set value.  Somewhat imprecisely, the level-set of $\VDist(\pnt,\surface)$ is a surface, and the value associated with the region between two consecutive level-sets will be the value $\VDist(\cdot, \surface)$ on the higher level-set. This process is somewhat similar to height contours used in drawing topographical maps. Since every level set is a surface, this induces an arrangement of surfaces.  For any point $\pnt \in \Re^d$, we can now compute $\MdPrcX{\SurfSet}(\pnt) $ by just figuring out in between what level-sets $\pnt$ lies. Therefore, evaluating $\MdPrcX{\SurfSet}(\pnt) $ is reduced to performing a point-location query in the arrangement of surfaces and returning the value associated with the face containing $\pnt$.

\begin{wrapfigure}{r}{3.2cm}
    \phantom{}\hfill%
    \includegraphics[width=3cm]{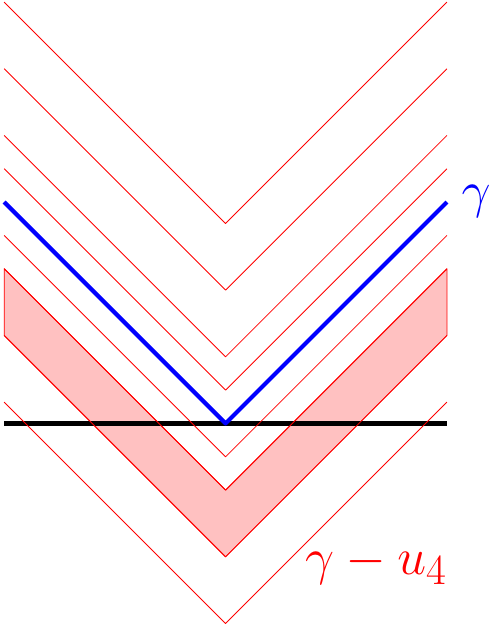}
\end{wrapfigure}
Note that $\normX{\seg}/2$ is a lower bound for the value of $\MdPrcX{\SurfSetB}(\cdot)$ in this prism and $\totalWeight \cdot \normX{\seg}$ is an upper bound on the value of $\MdOpt = \min_{\pnt \in \Prism} $ $\MdPrcX{\SurfSetB}(\pnt)$, where $\totalWeight = \Weight{\SurfSetB} \geq n$, where $\Weight{\SurfSetB}$ denotes the total weight of the surfaces of $\SurfSetB$.  As such, let $u = \eps \normX{\seg}/(10\totalWeight^2)$.  Next, let $u_i = i u$, for $i=1,\ldots, \lowLim = O(1/\eps)$. Let $u_i = (1+\eps/20) u_{i-1}$, for $i=\lowLim + 1,\ldots, \hiLim = O( \log_{1+\eps/20} \totalWeight) = O( \eps^{-1} \log \totalWeight )$. Note that $u_\hiLim > W^2 \normX{\seg} > 2 \MdOpt$.  Thus, if a point $\pnt \in \Prism$ is at a distance more than $u_\hiLim$ away from any surface of $\SurfSetB$, then its distance from this single surface is enough to ensure that $\MdPrcX{\SurfSetB}(\pnt) > 2\MdOpt$. In particular, for every surface $\surface \in \SurfSetB$, we create $O( \hiLim)$ copies of it as follows: $\surface - u_\hiLim, \surface - u_{\hiLim-1}, \ldots, \surface - u_1$, $\surface$, $\surface + u_1, \ldots, \surface + u_\hiLim$, where $\surface + x$ is the result of translating the surface $\surface$ up by distance $x$. Let $\SurfSetC(\surface)$ denote the resulting ``stack'' (i.e., set) of surfaces.

The surfaces of $\SurfSetC(\surface)$ partition the (domain of interest in the) prism into regions where the function $\VDist(\pnt, \surface )$ is the same up to a multiplicative factor of $(1+\eps/20)$. The only region that fails to comply with this condition is the region in between $\surface-u_1$ and $\surface + u_1$. Thus, if we approximate the value of $\VDist(\pnt, \surface )$ by the value of this function on the surface in this stack just above $\pnt$, we get an $(1 \pm \eps/20)$-approximation of this function, except for the region between $\surface - u_1$ and $\surface + u_1$.

In particular, let $\SurfSetC = \cup_{\surface \in \SurfSetB} \SurfSetC(\surface)$. Consider any point $\pnt \in \Prism$, and let $a_i = \VDist(\pnt, \surface_i)$, where $\surface_i \in \SurfSetB$, for $i=1,\ldots, n$, such that $\cardin{a_i} \leq u_\hiLim$. Also, let $b_i$ be the maximum of the values associated with the surfaces just above and below $\pnt$ in the stack $\SurfSetC(\surface_i)$. Thus, we have that
\begin{align*}
  \MdPrcX{\SurfSetB}(\pnt)
  &\leq%
    \overline{\MdPrcX{\SurfSetB}}(\pnt) =
    \sum_i b_i \leq \sum_i \pth{ u + \pth{1+\frac{\eps}{20}} a_i}
  \\&
  =%
    n
    \cdot \frac{\eps \normX{\seg}}{10\totalWeight^2} +
    \pth{1+\frac{\eps}{20}} \sum_i a_i%
  \\
  &\leq%
    \pth{1+\frac{\eps}{5}}
    \MdPrcX{\SurfSetB}(\pnt),
\end{align*}
since $\MdPrcX{\SurfSetB}(\pnt) \geq \normX{\seg}$.

Thus, let $\Arr$ be the arrangement $\Arr(\SurfSetC)$, and compute for every face $\face$ of $\Arr$ the value of $\overline{ \MdPrcX{ \SurfSetB } } (\pnt)$, where $\pnt \in \face$. By the above argument, the point $\popt$ realizing $\min_{x \in \Prism} \MdPrcX{\SurfSetB}(x)$ is approximated correctly by $\overline{\MdPrcX{\SurfSetB}}(\popt)$. As such, any point inside the face of $\Arr$ with the lowest associated value of $\overline{\MdPrcX{\SurfSetB}}$ is the required approximation.

It is easy to verify that the same algorithm with minor modifications would also enable us to approximate the minimum of the mean function $\SqPrcX{\SurfSet}(\cdot)$.  We conclude:
\begin{theorem}
    \thmlab{slow}%
    Let $\SurfSet$ be a set of $n$ weighted surface patches in $\Re^d$, with linearization dimension $\LinDim$, such that any vertical line intersects surfaces with total weight $\totalWeight$. Let $\eps > 0$ be a parameter. Then one can compute, in $O \pth{ n^{3\LinDim+ 1} \eps^{-\LinDim} \log^\LinDim \totalWeight }$ time, a point $x \in \Re^d$, such that $\MdPrcX{\SurfSet}(x) \leq (1+\eps)\MdOptX{\SurfSet}$, where $\MdOptX{\SurfSet} = \min_{x \in \Re^d} \MdPrcX{\SurfSet}(x)$.

    One can also compute, in the same time complexity, a point $y \in \Re^d$, such that $\SqPrcX{\SurfSet}(y) \leq (1+\eps)\SqOptX{\SurfSet}$, where $\SqOptX{\SurfSet} = \min_{x \in \Re^d} \SqPrcX{\SurfSet}(x)$.
\end{theorem}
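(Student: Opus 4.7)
My plan is to follow exactly the scheme sketched in the paragraphs preceding the statement: decompose $\Re^{d-1}$ into the cells of $\Decomp$, work prism by prism, and inside every prism $\Prism$ replace each active surface $\surface \in \SurfSetB$ by the stack $\SurfSetC(\surface)$ of $O(\hiLim) = O(\eps^{-1} \log \totalWeight)$ translated copies. Overlaying all stacks yields the arrangement $\Arr(\SurfSetC)$ inside $\Prism$, on whose faces the surrogate $\overline{\MdPrcX{\SurfSetB}}(\cdot)$ is constant. Computing this value on one representative point of every face, across every prism, and returning the face that achieves the minimum gives the desired output point.

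For correctness, the string of inequalities already written out in the body establishes $\MdPrcX{\SurfSetB}(\pnt) \leq \overline{\MdPrcX{\SurfSetB}}(\pnt) \leq (1+\eps/5)\MdPrcX{\SurfSetB}(\pnt)$ inside $\Prism$, provided no $\surface \in \SurfSetB$ is at vertical distance more than $u_\hiLim$ from $\pnt$. The tail case $|a_i| > u_\hiLim$ does not hurt us: since $u_\hiLim > \totalWeight^2\normX{\seg} > 2\MdOpt$ by the crude bound $\MdOptX{\SurfSetB} \leq \totalWeight\normX{\seg}$, any such point $\pnt$ has $\MdPrcX{\SurfSetB}(\pnt)$ strictly larger than $2\MdOptX{\SurfSetB}$ and so cannot be the prism's minimizer. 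Hence the face of lowest surrogate value inside $\Prism$ contains a $(1+\eps/5)$-approximate, and a fortiori $(1+\eps)$-approximate, local minimum, and taking the best one over all prisms recovers a $(1+\eps)$-approximation of the global optimum.

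For the running time I would bound the four stages in sequence. The decomposition $\Decomp$ has $O(n^{2\LinDim})$ cells, since linearization embeds every surface as a hyperplane in $\Re^\LinDim$ and the projections in $\Decomp$ form (a refinement of) a hyperplane arrangement of $n$ such hyperplanes. In each prism, the $2$-approximate vertical stabbing segment $\seg$ is found in $O(n)$ time via the algorithm of~\cite{ahv-aemp-04}. The stack $\SurfSetC$ contains $O\pth{n\eps^{-1}\log\totalWeight}$ surfaces, so the arrangement $\Arr(\SurfSetC)$ has combinatorial complexity $O\pth{(n\eps^{-1}\log\totalWeight)^\LinDim}$, again by linearization. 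Finally, evaluating $\overline{\MdPrcX{\SurfSetB}}$ at one point per face costs $O(n)$ by a naive sum over the active surfaces, giving $O\pth{n\cdot(n\eps^{-1}\log\totalWeight)^\LinDim}$ per prism; multiplying by the $O(n^{2\LinDim})$ prisms yields the announced bound $O\pth{n^{3\LinDim+1}\eps^{-\LinDim}\log^\LinDim\totalWeight}$.

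The $\SqPrcX{}$ variant requires no new idea: the surfaces $\surface \pm u_i$ still partition each prism into slabs where $\pth{\VDist(\pnt,\surface)}^2$ is multiplicatively $(1\pm\eps/5)$-constant once the sequence $u_i$ is interpreted through the squaring, so the same arrangement with squared per-slab values produces a $(1+\eps)$-approximation of $\SqOptX{\SurfSet}$ at the same cost. The only delicate point I expect to have to argue carefully is the interplay between the tail threshold $u_\hiLim$ and the a priori bound on $\MdOpt$; keeping the factor $\totalWeight^2$ in the definition of $u = \eps\normX{\seg}/(10\totalWeight^2)$ is essential, because the upper bound $\MdOpt \leq \totalWeight\normX{\seg}$ must be strictly dominated by $u_\hiLim$ in every prism for the discarded tail case to actually be safe.
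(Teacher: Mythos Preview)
Your proposal is correct and follows the same approach as the paper: decompose $\Re^{d-1}$ into the $O(n^{2\LinDim})$ cells of $\Decomp$, build the stacks $\SurfSetC(\surface)$ inside each prism, overlay, and pick the face minimizing the surrogate. The only cosmetic difference is in the running-time bookkeeping: the paper bounds the per-prism arrangement complexity by $O\pth{(n\eps^{-1}\log\totalWeight)^{\LinDim+1}}$ and then invokes a ``careful implementation'' (traversing the arrangement so that $\overline{\MdPrcX{\SurfSetB}}$ is updated in $O(1)$ as one crosses each face), whereas you bound the arrangement by $O\pth{(n\eps^{-1}\log\totalWeight)^{\LinDim}}$ and pay $O(n)$ per face by naive summation; both routes land on the same $O\pth{n^{3\LinDim+1}\eps^{-\LinDim}\log^{\LinDim}\totalWeight}$.
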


\begin{proof}
    The correctness follows from the above discussion. As for the running time, there are $O(n^{2\LinDim})$ prisms. In each prism, we have at most $n$ surfaces, and every surface gets replicated $O( \eps^{-1} \log \totalWeight )$ times. The complexity of the arrangement inside each prism is $O \pth{ \pth{n \eps^{-1} \log \totalWeight }^{\LinDim+1} }$. A careful implementation would require time proportional to the complexity of all those arrangements, which is $O \pth{ n^{3\LinDim+ 1} \eps^{-\LinDim} \log^\LinDim \totalWeight }$, as claimed.
\end{proof}

\section{The Main Result and Some Applications}
\seclab{main}

By plugging \thmref{reduction} into \thmref{slow}, we get the main result of this paper:

\begin{theorem}
    \thmlab{main}%
    Given a set of $n$ unweighted surfaces $\SurfSet$ in $\Re^d$, as defined in \secref{problem_definition}, and a parameter $0 < \eps<1/4$, then one can compute, in $O \pth{ n + \poly( \log n, 1/\eps ) }$ time, a point $x \in \Re^d$, such that $\MdPrcX{\SurfSet}(x) \leq (1+\eps)\MdOptX{\SurfSet}$, where $\MdOptX{\SurfSet} = \min_{x \in \Re^d} \MdPrcX{\SurfSet}(x)$.

    One can also compute, in the same time complexity, a point $y \in \Re^d$, such that $\SqPrcX{\SurfSet}(y) \leq (1+\eps)\SqOptX{\SurfSet}$, where $\SqOptX{\SurfSet} = \min_{x \in \Re^d} \SqPrcX{\SurfSet}(x)$.

    The algorithm is randomized and succeeds with high probability.
\end{theorem}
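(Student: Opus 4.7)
The plan is to chain \thmref{reduction} with \thmref{slow}, choosing the accuracy parameters so that the composition of the two approximation errors remains within the target $(1+\eps)$-factor. First I would set $\eps' = \eps/5$ (say) and invoke \thmref{reduction} on the input set $\SurfSet$ of $n$ surfaces. In $O(n + \poly(\log n, 1/\eps))$ time this yields, with high probability, a weighted set $\SurfSetB$ of $N = O(\poly(\log n, 1/\eps))$ surface patches such that $\MdPrcX{\SurfSet}(\pnt) \ApproxEps{\eps'} \MdPrcX{\SurfSetB}(\pnt)$ and $\SqPrcX{\SurfSet}(\pnt) \ApproxEps{\eps'} \SqPrcX{\SurfSetB}(\pnt)$ for every $\pnt \in \Re^d$, with total weight $\totalWeight = n$ along every vertical line.

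Next I would feed $\SurfSetB$ into \thmref{slow} with parameter $\eps'$. The linearization dimension is inherited from $\SurfSet$ (each patch of $\SurfSetB$ lies on an original surface of $\SurfSet$), so the running time bound applies with $\LinDim$ unchanged. Plugging in $|\SurfSetB| = O(\poly(\log n, 1/\eps))$ and $\totalWeight = n$, \thmref{slow} delivers, in
\[
O\!\pth{N^{3\LinDim+1} (\eps')^{-\LinDim} \log^{\LinDim} n} = O(\poly(\log n, 1/\eps))
\]
time, a point $x^\star \in \Re^d$ with $\MdPrcX{\SurfSetB}(x^\star) \leq (1+\eps')\, \min_{x \in \Re^d} \MdPrcX{\SurfSetB}(x)$, and analogously a point $y^\star$ for the $\SqPrcChar$-variant.

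It remains to translate these guarantees on $\SurfSetB$ back to guarantees on $\SurfSet$. Let $x_{\mathrm{opt}}$ realize $\MdOptX{\SurfSet}$. Combining the two $(1\pm\eps')$-approximations and applying \obsref{approximate_transitive}(i) to the chain $\MdPrcX{\SurfSet}(x^\star) \ApproxEps{\eps'} \MdPrcX{\SurfSetB}(x^\star) \leq (1+\eps') \MdPrcX{\SurfSetB}(x_{\mathrm{opt}}) \ApproxEps{\eps'} (1+\eps')\MdPrcX{\SurfSet}(x_{\mathrm{opt}})$, one obtains $\MdPrcX{\SurfSet}(x^\star) \leq (1+\eps)\MdOptX{\SurfSet}$, provided $\eps'$ is chosen as a small enough constant multiple of $\eps$ (here $\eps'=\eps/5$ suffices given $\eps < 1/4$). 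The same argument, using \lemref{snap_square} in the reduction and \thmref{slow} for the squared variant, handles the $L_2$ case. The overall running time is the sum $O(n + \poly(\log n, 1/\eps)) + O(\poly(\log n, 1/\eps)) = O(n + \poly(\log n, 1/\eps))$, and the algorithm succeeds with high probability because each of the two randomized stages does.

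The main obstacle is essentially bookkeeping: one must verify that the linearization dimension and the ``total weight along a vertical line equals $n$'' property promised by \thmref{reduction} are exactly what \thmref{slow} needs to keep its running time polylogarithmic in $n$. Given these facts the proof is a clean two-step composition; the only subtlety is choosing $\eps'$ so that the compounded error fits inside $\eps$, which is straightforward via \obsref{approximate_transitive}.
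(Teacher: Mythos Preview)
Your proposal is correct and matches the paper's approach exactly: the paper's entire proof is the single sentence ``By plugging \thmref{reduction} into \thmref{slow}, we get the main result of this paper.'' Your write-up simply spells out the bookkeeping (choice of $\eps'$, preservation of linearization dimension, total weight $\totalWeight=n$, and the running-time arithmetic) that the paper leaves implicit.
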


\subsection{Applications}

The discussion \secref{circle_fitting} implies that we can readily apply \thmref{main} to the problem of $L_1$-fitting a circle to a set of points in the plane. Note that in fact the same reduction would work for the $L_2$-fitting problem, and for fitting a sphere to points in higher dimensions. We conclude:
\begin{theorem}[$L_1/L_2$-fitting to a circle/sphere]
    \thmlab{sphere_fitting}%
    Consider a set $P$ of $n$ points in $\Re^d$, and $\eps >0$ a parameter. One can $(1+\eps)$-approximate the sphere best fitting the points of $P$, where the price is the sum of Euclidean distances of the points of $P$ to the sphere. The running time of the algorithm is $O(n + \poly( \log n, 1/\eps ))$, and the algorithm succeeds with high probability.

    Similarly, this yields a $(1+\eps)$-approximation to the sphere minimizing the sum of square distances of the points to the sphere.
\end{theorem}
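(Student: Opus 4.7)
The plan is to realize the sphere-fitting problem as an instance of the general framework of \secref{problem_definition} and then invoke \thmref{main} as a black box. For each input point $p_i = (x_{i,1},\ldots,x_{i,d}) \in P$, I will associate the surface
\[
   \surface_i = \Set{ (c, \dY{c}{p_i}) }{ c \in \Re^d } \subseteq \Re^{d+1},
\]
which is the cone in $\Re^{d+1}$ whose height over a center $c \in \Re^d$ equals the Euclidean distance $\dY{c}{p_i}$. A candidate sphere with center $c$ and radius $r$ is encoded as the point $(c,r) \in \Re^{d+1}$, and the vertical distance from $(c,r)$ to $\surface_i$ is precisely $\cardin{\dY{c}{p_i} - r}$, i.e., the contribution of $p_i$ to the $L_1$-fitting cost (and its square is the $L_2$ contribution). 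Consequently, the $L_1$- and $L_2$-fitting costs of the sphere $(c,r)$ coincide exactly with $\MdPrcX{\SurfSet}(c,r)$ and $\SqPrcX{\SurfSet}(c,r)$ for $\SurfSet = \{\surface_1,\ldots,\surface_n\}$, where we set $\DomainD = \Re^{d+1}$ (allowing $r < 0$ is harmless since the optimum has $r \geq 0$).

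Next I verify that $\SurfSet$ fits the algebraic shape required in \secref{problem_definition}: each $\surface_i$ is the graph of $f_i(c) = \sqrt{p_i(c)}$ where
\[
   p_i(c) = \dY{c}{p_i}^2 = \normX{c}^2 - 2 \DotProd{p_i}{c} + \normX{p_i}^2
\]
is a degree-$2$ polynomial in $c$. The monomials that appear across $i = 1,\ldots,n$ are $\normX{c}^2$, the coordinates $c_1,\ldots,c_d$, and the constant $1$, so by the bound of Agarwal and \Matousek{} the linearization dimension is $\LinDim \leq d+1 = O(1)$. Thus $\SurfSet$ is an admissible instance of \thmref{main}.

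Applying \thmref{main} to $\SurfSet$ with parameter $\eps$ yields, in $O(n + \poly(\log n, 1/\eps))$ time and with high probability, a point $(c^*, r^*) \in \Re^{d+1}$ with $\MdPrcX{\SurfSet}(c^*,r^*) \leq (1+\eps)\MdOptX{\SurfSet}$; the sphere with center $c^*$ and radius $r^*$ (or $\max(r^*,0)$) is then the desired $(1+\eps)$-approximate $L_1$-fitting sphere. The identical reduction with $\SqPrcX{\cdot}$ in place of $\MdPrcX{\cdot}$ delivers the corresponding $L_2$ statement. I do not foresee a genuine obstacle here: the only nonroutine point is checking that the polynomials $p_i(c)$ share a small, shape-independent set of monomials so that $\LinDim = O(1)$, which is immediate from the expansion of $\dY{c}{p_i}^2$ above; the rest is bookkeeping on top of \thmref{main}.
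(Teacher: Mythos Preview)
Your proposal is correct and follows essentially the same route as the paper: the paper simply points back to the cone construction of \secref{circle_fitting} (each point $p_i$ induces a cone $\surface_i$ in $\Re^{d+1}$, a sphere $(c,r)$ is a point in $\Re^{d+1}$, and the vertical distance is $\cardin{\dY{c}{p_i}-r}$), notes that the same reduction works for $L_2$ and for spheres in higher dimensions, and then invokes \thmref{main}. You have written out explicitly the one detail the paper leaves implicit, namely that the polynomials $p_i(c)=\dY{c}{p_i}^2$ share the monomials $\normX{c}^2, c_1,\ldots,c_d,1$ and hence have linearization dimension $O(d)$; this is exactly the right check and nothing further is needed.
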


To our knowledge, \thmref{sphere_fitting} is the first subquadratic algorithm for this problem. A roughly quadratic time algorithm for the problem of $L_1$-fitting a circle to points in the plane was provided by Har-Peled and Koltun \cite{hk-so-05}.

\subsubsection{\TPDF{$L_1/L_2$}{L1/L2}-Fitting a cylinder to a point-set}
\seclab{cylinder}

Let $P = \brc{\pnt_1,\ldots, \pnt_n}$ be a set of $n$ points in $\Re^d$, $\ell$ be a line in $\Re^d$ parameterized by a point $\pntA \in \ell$, and a direction $\vecA$ on the unit sphere $\Sphere{n} \subseteq \Re^d$, and let $r$ be the radius of the cylinder having $\ell$ as its center. We denote by $\Cyl = \Cyl(\pntA,\vecA,r)$ the cylinder having $\displaystyle \ell = \cup_{t \in \Re} \pth{ \pntA + t \vecA \,}$ as its center. For a point $\pnt_i \in P$, we have that its distance from $\Cyl$ is
\[
    f_i(\pntA, \vecA, r) = \dmY{\pnt_i}{ \Cyl}%
    =%
    \cardin{ \Bigl. \normX{ \pnt_i - \pntA - \DotProd{\pnt_i - \pntA}{\vecA\,} \vecA } - r} = \cardin{ \sqrt{p_i(\pntA, \vecA, r )} - r},
\]
where $p_i(\pntA, \vecA, r)$ is a polynomial with linearization dimension $O(d^4)$ (as can be easily verified), for $i=1,\ldots, n$.  The linearization dimension in this case can be reduced with more care, see \cite{ahv-aemp-04}. Thus, the overall price of fitting $\Cyl$ to the points of $P$ is $\sum_i f_i( \Cyl)$. This falls into our framework, and we get:
\begin{theorem}
    \thmlab{cylinder_fitting}%
    Let $P$ be a set of $n$ points in $\Re^d$, and $\eps >0$ a parameter. One can $(1+\eps)$-approximate the cylinder that best fits the points of $P$, where the price is the sum of Euclidean distances of the points of $P$ to the cylinder. The running time of the algorithm is $O(n + \poly( \log n, 1/\eps ))$, and the algorithm succeeds with high probability.

    Similarly, one can $(1+\eps)$-approximate the cylinder that minimizes the sum of square distances of the points of $P$ to the cylinder.
\end{theorem}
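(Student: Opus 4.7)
The plan is to cast cylinder fitting as an instance of the abstract problem of \secref{problem_definition} and then invoke \thmref{main} as a black box. The parameter space is $\Domain \subseteq \Re^{2d+1}$ whose coordinates are the base point $\pntA$, the direction $\vecA$, and the radius $r$; the domain $\DomainD$ is the semi-algebraic set cut out by $\normX{\vecA} = 1$ (and any standard normalization needed to make the representation of a cylinder unique up to a set of measure zero, e.g., fixing $\pntA$ to lie in a hyperplane orthogonal to $\vecA$). Since $\DomainD$ has constant descriptive complexity, the final remark of \secref{problem_definition} assures us the main algorithm still applies.

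For each input point $\pnt_i \in P$, the displayed formula
\[
f_i(\pntA,\vecA,r) = \cardin{\Bigl. \sqrt{p_i(\pntA,\vecA,r)} - r}
\]
is exactly of the form appearing in case (ii) of \secref{problem_definition}, with $x_d$ playing the role of $r$ and the $(d-1)$-variate function $\sqrt{p_i(\cdot)}$ serving as the $i$th surface $\surface_i$. For the $L_2$ version, squaring the absolute value gives case (iii). So, setting $\Weight{i}=1$ for every $i$, the total $L_1$ (respectively, $L_2$) cost of fitting a cylinder $\Cyl$ to $P$ coincides with $\MdPrcX{\SurfSet}(\cdot)$ (respectively, $\SqPrcX{\SurfSet}(\cdot)$) evaluated at the parameter vector of $\Cyl$.

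Next, I would verify the hypothesis of \thmref{main} by checking that $p_i(\pntA,\vecA,r)$ is a polynomial of constant degree in the parameters with a linearization dimension bounded by a function of $d$ alone. This is immediate from the explicit expression $\normX{\pnt_i - \pntA - \DotProd{\pnt_i-\pntA}{\vecA}\vecA}^2$: expanding out yields a fixed collection of monomials in the coordinates of $\pntA$, $\vecA$, and (once $r$ is subtracted) $r$, whose number depends only on $d$. The paper already asserts the naive bound $\LinDim = O(d^4)$, with a sharper bound available from \cite{ahv-aemp-04}; either suffices for our purposes since $d$ is a constant.

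With the problem thus encoded, \thmref{main} applied to the set $\SurfSet = \brc{\surface_1,\ldots,\surface_n}$ produces, in $O(n + \poly(\log n, 1/\eps))$ time and with high probability, a parameter point whose objective value is within a $(1+\eps)$ factor of the optimum for both the $L_1$ and $L_2$ formulations. Interpreting this point as a cylinder gives the claimed approximations. The main (and essentially only) obstacle is the technical bookkeeping of the linearization dimension together with handling the semi-algebraic domain $\DomainD$, but both are routine and already anticipated by the framework.
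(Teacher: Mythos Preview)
Your proposal is correct and follows essentially the same route as the paper: the discussion preceding \thmref{cylinder_fitting} already establishes that $f_i(\pntA,\vecA,r)=\cardin{\sqrt{p_i(\pntA,\vecA,r)}-r}$ with $p_i$ a polynomial of linearization dimension $O(d^4)$, which places the problem squarely in the framework of \secref{problem_definition} so that \thmref{main} applies directly. Your added remarks about the semi-algebraic constraint $\normX{\vecA}=1$ and the normalization of $\pntA$ are a nice bit of extra care, but the paper simply absorbs these into the ``minor modifications'' caveat at the end of \secref{problem_definition}.
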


Interestingly, in two dimensions, an algorithm similar to the one in \thmref{cylinder_fitting} solves the problem of finding two parallel lines that minimize the sum of distances of the points to the lines (i.e., each point contributes its distance to the closer of the two lines).

\subsubsection{\TPDF{$1$}{1}-Median Clustering of Partial Data}

Consider an input of $n$ points $p_1,\ldots, p_n$, where the points are not explicitly given. Instead, we are provided with a set $\FlatSet = \brc{\Flat_1,\ldots, \Flat_n}$ of $n$ flats, such that $p_i \in\Flat_i$, where a \emph{flat} is an affine subspace of $\Re^d$. This naturally arises when we have partial information about a point and the point must comply with certain linear constraints that define its flat.

It is now natural to want to best-fit or cluster the partial data. For example, we might wish to compute the smallest ball that encloses all the partial points. This boils down to computing the smallest ball $\ball$ that intersects all the flats (i.e., we assume the real point $p_i$ lies somewhere in the intersection of the ball $\ball$ and $\Flat_i$).  An approximation algorithm for this problem that has polynomial dependency on the dimension $d$ (but bad dependency on the dimensions of the flats) was recently published by Gao \etal{} \cite{gls-aidid-06}.

Here, we are interested in finding the point $\Center$ that minimizes the sum of distances of the point $\Center$ to the flats $f_1,\ldots, f_n$. Namely, this is the $1$-median clustering problem for partial data.

Consider a flat $\Flat$ which contains the point $\pntA$, and is spanned by the unit vectors $\vecA_1,\ldots, \vecA_k$. That is $\Flat = \Set{ \pntA + t_1 \vecA_1 + \cdots + t_k\vecA_k}{ t_1, \ldots, t_k \in \Re}$. Then, we have that the distance of $\pnt \in \Re^d$ from the flat $\Flat$ is
\[
    \dmY{\pnt}{\Flat}%
    =%
    \normX{ \pnt - \pntA - \sum_{i=1}^k \DotProd{ \pnt - \pntA}{\vecA_i}\vecA_i} = \sqrt{\Polynomial(\pnt)},
\]
where $\Polynomial(\cdot)$ is a polynomial with linearization dimension $O(d^2)$. Thus, the problem of $1$-median clustering of partial data is no more than finding the point $\pnt$ that minimizes the function $\MdPrcX{\FlatSet}(\pnt) = \sum_i \dmY{\pnt}{ \Flat_i}$.  Approximating the minimum of this function can be done using \thmref{main}. We conclude:

\begin{theorem}
    \thmlab{clustering_partial}%
    Let $\FlatSet = \brc{\Flat_1, \ldots, \Flat_n}$ be a set of $n$ flats in $\Re^d$, and $\eps >0$ a parameter. One can compute a point $\pnt \in \Re^d$, such that $\MdPrcX{\FlatSet}(\pnt)$ is a $(1+\eps)$-approximation to $\min_{\pntA} \MdPrcX{\FlatSet}(\pntA)$.  The running time of the algorithm is $O(n + \poly(\log n, 1/\eps))$, and the algorithm succeeds with high probability.
\end{theorem}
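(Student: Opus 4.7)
The plan is to cast the problem as an instance of the general framework of \secref{problem_definition} and then invoke \thmref{main} directly; the only real work is checking that the distance to a flat has the required square-root-of-polynomial form and bounding the linearization dimension.

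First I would write down an explicit polynomial for the squared distance. Given a flat $\Flat_i$ passing through a basepoint $\pntA_i$ and spanned by orthonormal vectors $\vecA_{i,1}, \ldots, \vecA_{i,k_i}$ (which can be produced once, in time linear in $n$, by Gram--Schmidt on each flat's given spanning set), orthogonal projection onto $\Flat_i$ followed by Pythagoras gives
\[
\dmY{\pnt}{\Flat_i}^2
= \normX{\pnt - \pntA_i}^2
  - \sum_{j=1}^{k_i} \DotProd{\pnt - \pntA_i}{\vecA_{i,j}}^2
=: \Polynomial_i(\pnt),
\]
which is a polynomial of degree exactly $2$ in the coordinates of $\pnt$. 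Hence $\dmY{\pnt}{\Flat_i} = \sqrt{\Polynomial_i(\pnt)}$, matching the template $f_i = \sqrt{p_i}$ required by the framework, and $\MdPrcX{\FlatSet}(\pnt) = \sum_i \dmY{\pnt}{\Flat_i}$ is exactly the objective of case (i)/(ii) in \secref{problem_definition} (with $\DomainD = \Re^d$).

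Next I would argue the linearization dimension is $O(d^2)$. Although each flat may have its own dimension $k_i \leq d$ and contribute $k_i$ squared inner-product terms, each $\Polynomial_i$ is still a quadratic polynomial in $d$ variables, and all such polynomials live in the common linear space spanned by the $\binom{d+2}{2} = O(d^2)$ monomials $\brc{1, x_a, x_a x_b}_{1 \le a \le b \le d}$. Thus the explicit linearization $\LinMap(\pnt) = (1, \pnt_1, \ldots, \pnt_d, \pnt_1^2, \pnt_1\pnt_2, \ldots, \pnt_d^2)$ simultaneously linearizes every $\Polynomial_i$, and $\LinDim = O(d^2)$ as claimed in the paragraph preceding the statement.

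Having verified these two conditions, I would feed $\SurfSet = \brc{\sqrt{\Polynomial_i}}_{i=1}^n$ into \thmref{main}, which in $O(n + \poly(\log n, 1/\eps))$ time returns a point $\pnt \in \Re^d$ with $\MdPrcX{\SurfSet}(\pnt) \leq (1+\eps)\MdOptX{\SurfSet}$ with high probability; since $\MdPrcX{\SurfSet} \equiv \MdPrcX{\FlatSet}$ by the polynomial identity above, this is precisely the statement to be proved. The only conceivable obstacle is verifying the linearization dimension uniformly over flats of varying dimension $k_i$, but once one observes that $\sum_j \DotProd{\pnt - \pntA_i}{\vecA_{i,j}}^2$ is merely a quadratic form in $\pnt$ regardless of $k_i$, this is immediate, and no further analysis is required beyond invoking the main theorem.
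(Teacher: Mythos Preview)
Your proposal is correct and follows essentially the same route as the paper: express the distance to each flat as $\sqrt{\Polynomial_i(\pnt)}$ for a quadratic $\Polynomial_i$, observe that all such quadratics share linearization dimension $O(d^2)$, and invoke \thmref{main}. You supply slightly more detail (Gram--Schmidt for the orthonormal basis, the Pythagorean simplification, and the explicit monomial count), but the argument is the same one the paper sketches in the paragraph preceding the theorem.
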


Note that $1$-mean clustering in this case is trivial as it boils down to a minimization of a quadratic polynomial.

\section{Conclusions}
\seclab{conclusions}

In this paper, we have described in this paper a general approximation technique for problems of $L_1$-fitting of a shape to a set of points in low dimensions.  The running time of the new algorithm is $O( n + \poly( \log n, 1/\eps))$, which is a linear running time for a fixed $\eps$.  The constant powers hiding in the polylogarithmic term are too embarrassing to be explicitly stated. Still, they are probably somewhere between $20$ to $60$, even just for the problem of $L_1$-fitting a circle to a set of points in the plane. Namely, this algorithm is only of theoretical interest. As such, the first open problem raised by this work is to improve these constants. A considerably more interesting problem is to develop a practical algorithm for this family of problems.

A natural, tempting question is whether one can use the techniques in this paper for the problem of $L_1$-fitting a spline or a Bezier curve to a set of points. Unfortunately, the resulting surfaces in the parametric space are no longer nice functions. Therefore, the algorithmic difficulty here is overshadowed by algebraic considerations. We leave this as an open problem for further research.

Another natural question is whether one can use the techniques of Har-Peled and Wang \cite{hw-sfo-04} directly, to compute a coreset for this problem, and solve the problem on the coreset directly (our solution did a similar thing, by breaking the parametric space into a small number of prisms, and constructing a small ``sketch'' inside each such region). This would be potentially a considerable simplification over our current involved and messy approach. There is unfortunately a nasty technicality that requires that a coreset for the $L_1$-fitting of a linear function is also a coreset if we take the square root of the functions (as holds for the construction of \secref{chunking}). It seems doubtful that this claim holds in general, but maybe a more careful construction of a coreset for the case of planes in three dimensions would still work. We leave this as an open problem for further research.

The author believes that the algorithm presented in this paper should have other applications. We leave this as an open problem for further research.

\paragraph*{Acknowledgments.}

The author thanks Pankaj Agarwal, Arash Farzan, John Fischer, Vladlen Koltun, Bardia Sadri, Kasturi Varadarajan, and Yusu Wang for helpful and insightful discussions related to the problems studied in this paper.

The author also thanks the anonymous referees for their detailed and patient comments on the manuscript.

\begingroup%
\emergencystretch=1em   %
\printbibliography[heading=subbibliography,segment=\therefsegment]{}%
\endgroup %

\end{document}